\let\newfloat\newfloat@ltx
\newtcolorbox[auto counter]{pabox}[2][]{fonttitle=\bfseries,
title=Example~\thetcbcounter: #2,#1,colframe=gray}
\def\HC{\mathcal{H}}
\def\LC{\mathcal{L}}
\def\ad{^{\dagger}}
\newcommand{\poly}{\operatorname{poly}}
\newcommand{\BC}{\mathcal{B}}
\newcommand{\DC}{\mathcal{D}}
\newcommand{\GC}{\mathcal{G}}
\newcommand{\MC}{\mathcal{M}}
\newcommand{\OC}{\mathcal{O}}
\newcommand{\SC}{\mathcal{S}}
\newcommand{\TC}{\mathcal{T}}
\newcommand{\UC}{\mathcal{U}}
\newcommand{\Var}{{\rm Var}}
\renewcommand{\leq}{\leqslant}
\renewcommand{\Re}{\text{Re}}
\DeclareMathOperator*{\argmin}{arg\,min}
\renewcommand{\vec}[1]{\boldsymbol{#1}}  
\newcommand*{\id}{\openone}
\newcommand{\bs}{\textsf{BS}}
\renewcommand{\th}{\theta } 
\newcommand{\lm}{\lambda }
\newcommand{\thv}{\boldsymbol{\uptheta}}
\def\be{\begin{equation}}
\def\ee{\end{equation}}
\def\bs{\begin{split}}
\def\e{\end{split}}
\def\ba{\begin{eqnarray}}
\def\bea{\begin{eqnarray}}
\def\tea{\end{eqnarray}}
\def\ea{\end{eqnarray}}
\def\eea{\end{eqnarray}}
\def\l{\lambda}
\def\tn{^{\otimes n}}
\def\tn{^{\otimes n}}
\newcommand\mf[1]{\mathfrak{#1}}
\newcommand\mbb[1]{\mathbb{#1}}
\newcommand\spn{\text{span}}
\newcommand\X{\text{X}}
\newtheorem{theorem}{Theorem}
\newtheorem{lemma}{Lemma}
\newtheorem{corollary}{Corollary}
\newtheorem{definition}{Definition}
\def\be{\begin{equation}}
\def\te{\end{equation}}
\def\ee{\end{equation}}
\def\ba{\begin{eqnarray}}
\def\bea{\begin{eqnarray}}
\def\tea{\end{eqnarray}}
\def\ea{\end{eqnarray}}
\def\eea{\end{eqnarray}}
\definecolor{myblue}{RGB}{0,163,243}
\definecolor{myred}{RGB}{255,100,100}
\definecolor{mygreen}{RGB}{0,153,0}
\definecolor{mypurple}{RGB}{153,51,253}
\newtcolorbox{blue_boxed_example}[1]{colback=myblue!5!white,colframe=myblue!75!black,fonttitle=\bfseries,title=#1}
\newtcolorbox{red_boxed_example}[1]{colback=myred!5!white,colframe=myred!75!black,fonttitle=\bfseries,title=#1}
\newtcolorbox{green_boxed_example}[1]{colback=mygreen!5!white,colframe=mygreen!75!black,fonttitle=\bfseries,title=#1}
\newtcolorbox{purple_boxed_example}[1]{colback=mypurple!5!white,colframe=mypurple!75!black,fonttitle=\bfseries,title=#1}
\newcommand{\er}{\textnormal{e}}
\newcommand{\ir}{\textnormal{i}}
\begin{document}
\title{Theoretical Guarantees for Permutation-Equivariant Quantum Neural Networks}

\author{Louis Schatzki}
\email{louisms2@illinois.edu}
\affiliation{Information Sciences, Los Alamos National Laboratory, Los Alamos, New Mexico 87545, USA}
\affiliation{Electrical and Computer Engineering, University of Illinois Urbana-Champaign, Urbana, Illinois, 61801, USA}

\author{Mart\'{i}n Larocca}
\thanks{The two first authors contributed equally.}
\affiliation{Theoretical Division, Los Alamos National Laboratory, Los Alamos, New Mexico 87545, USA}
\affiliation{Center for Nonlinear Studies, Los Alamos National Laboratory, Los Alamos, New Mexico 87545, USA}

\author{Quynh T. Nguyen}
\affiliation{Theoretical Division, Los Alamos National Laboratory, Los Alamos, New Mexico 87545, USA}
\affiliation{Harvard Quantum Initiative, Harvard University, Cambridge, Massachusetts 02138, USA}

\author{Fr\'{e}d\'{e}ric Sauvage}
\affiliation{Theoretical Division, Los Alamos National Laboratory, Los Alamos, New Mexico 87545, USA}

\author{M. Cerezo}
\email{cerezo@lanl.gov} 
\affiliation{Information Sciences, Los Alamos National Laboratory, Los Alamos, New Mexico 87545, USA}

\begin{abstract}
    Despite the great promise of quantum machine learning models,  there are several challenges one must overcome before unlocking their full potential. For instance, models based on quantum neural networks (QNNs) can suffer from excessive local minima and barren plateaus in their training landscapes. Recently, the nascent field of geometric quantum machine learning (GQML) has emerged as a potential solution to some of those issues. The key insight of GQML is that one should design architectures, such as equivariant QNNs, encoding the symmetries of the problem at hand.  Here, we focus on problems with permutation symmetry  (i.e., symmetry group $S_n$), and show how to build $S_n$-equivariant QNNs. We provide an analytical study of their performance, proving that  they do not suffer from barren plateaus, quickly reach overparametrization, and generalize well from small amounts of data. To verify our results, we perform numerical simulations for a graph state classification task. Our work provides theoretical guarantees for equivariant QNNs, thus indicating the power and potential of GQML.
\end{abstract}
\maketitle

\section*{Introduction}

Symmetry studies and formalizes the invariance of objects under some set of operations. A wealth of theory has gone into describing symmetries as mathematical entities through the concept of groups and representations. While the analysis of  symmetries in nature has greatly improved our understanding of the laws of physics, the study of symmetries in data has just recently gained momentum within the framework of learning theory.  In the past few years, classical machine learning practitioners realized that models tend to perform better when constrained to respect the underlying symmetries of the data. This has led to the blossoming field of geometric deep learning~\cite{cohen2016group,bronstein2021geometric,  kondor2018generalization, bogatskiy2022symmetry,bekkers2018roto}, where symmetries are incorporated as geometric priors into the learning architectures, improving trainability and generalization performance~\cite{schutt2017continuous,boyda2021sampling,rezende2019equivariant,thomas2018tensor,toth2019hamiltonian,kohler2020equivariant,anderson2019cormorant,bogatskiy2020lorentz}. 

The tremendous success of geometric deep learning has recently inspired researchers to import these ideas to the realm of quantum machine learning (QML)~\cite{schuld2015introduction,biamonte2017quantum,cerezo2022challenges}. QML is a new and exciting field at the intersection of classical machine learning, and quantum computing. By running routines in quantum hardware, and thus exploiting the exponentially large dimension of the Hilbert space, the hope is that QML algorithms can outperform their classical counterparts when learning from data~\cite{huang2021provably}.

The infusion of ideas from geometric deep learning to QML has been termed "geometric quantum machine learning"(GQML)~\cite{larocca2022group,meyer2022exploiting, sauvage2022building, zheng2022super,zheng2021speeding,nguyen2022atheory,wang2022symmetric}. GQML leverages the machinery of group and representation theory~\cite{ragone2022representation} to build quantum architectures that encode symmetry information about the problem at hand. For instance, when the model is parametrized through a quantum neural network (QNN)~\cite{abbas2020power,cerezo2022challenges,liu2022analytic,liu2021representation}, GQML indicates that the layers of the QNN should be equivariant under the action of the symmetry group associated to the dataset. That is, applying a symmetry transformation on the input to the QNN layers should be the same as applying it to its output. 

One of the main goals of GQML is to create architectures that solve, or at least significantly mitigate, some of the known issues of standard symmetry non-preserving QML models~\cite{cerezo2022challenges}. For instance, it has been shown that the optimization landscapes of generic QNNs can exhibit a large number of local minima~\cite{bittel2021training,anschuetz2022beyond,fontana2022nontrivial,larocca2021theory}, or be prone to the barren plateau phenomenon~\cite{mcclean2018barren,cerezo2020cost,sharma2020trainability,holmes2020barren,holmes2021connecting,cerezo2020impact,marrero2020entanglement,patti2020entanglement,uvarov2020barren,thanasilp2021subtleties,larocca2021diagnosing,wang2020noise,wiersema2020exploring} whereby the loss function gradients vanish exponentially with the problem size. Crucially, it is known that barren plateaus and excessive local minima are connected to the expressibility~\cite{sim2019expressibility,holmes2021connecting,larocca2021diagnosing,larocca2021theory,anschuetz2022beyond} of the QNN, so that problem-agnostic architectures are more likely to exhibit trainability issues. In this sense, it is expected that following the GQML program of baking symmetry directly into the algorithm, will lead to models with  sharp inductive biases that suitably limit their expressibility and search space.

In this work we leverage the GQML toolbox to create models that are permutation invariant, i.e., models whose outputs remain invariant under the action of the symmetric group~$S_n$ (see Fig.~\ref{fig:fig1}). We focus on this particular symmetry as learning problems with permutation symmetries abound. Examples include learning over sets of elements~\cite{zaheer2017deep,maron2020learning}, modeling relations between pairs (graphs)~\cite{maron2018invariant, keriven2019universal, maron2019provably,verdon2019quantumgraph,mernyei2022equivariant,skolik2022equivariant} or multiplets (hypergraphs) of entities~\cite{maron2019universality,thiede2020general,pan2022permutation}, problems defined on grids (such as condensed matter systems)~\cite{farhi2014quantum,hadfield2019quantum,cong2019quantum,caro2021generalization}, molecular systems~\cite{peruzzo2014variational,cerezo2020variationalreview,tang2019qubit}, evaluating genuine multipartite entanglement~\cite{horodecki2009quantum,walter2016multipartite,beckey2021computable,schatzki2022hierarchy}, or working with distributed quantum sensors~\cite{guo2020distributed,zhang2021distributed,huerta2022inference}.

Our first contribution is to provide guidelines to build unitary $S_n$-equivariant QNNs. We then derive rigorous theoretical guarantees for these architectures in terms of their trainability and generalization capabilities. Specifically, we prove  that $S_n$-equivariant QNNs do not lead to barren plateaus, can be overparametrized with polynomially deep circuits, and generalize well with a only a polynomial number of training points.  We also identify problems (i.e., datasets) for which the model is trainable, but also datasets leading to untrainability. All these appealing properties are also demonstrated in numerical simulations of a graph classification task. Our empirical results verify our theoretical ones, and even show that the performance of $S_n$-equivariant QNNs can, in practice, be better than that guaranteed by our theorems.

\section*{Results}

\subsection*{Preliminaries}

While the formalism of GQML can be readily applied to a wide range of tasks with  $S_n$ symmetry, here we will focus on supervised learning problems. We note however that our results can be readily extended to more general scenarios such as unsupervised learning~\cite{otterbach2017unsupervised,kerenidis2019q}, reinforced learning~\cite{saggio2021experimental,skolik2021quantum}, generative modeling~\cite{dallaire2018quantum,benedetti2019generative,kieferova2021quantum,romero2021variational}, or to the more task-oriented computational paradigm of variational quantum algorithms~\cite{cerezo2020variationalreview,bharti2021noisy}.

\begin{figure}[t]
    \centering
    \includegraphics[width=.8\columnwidth]{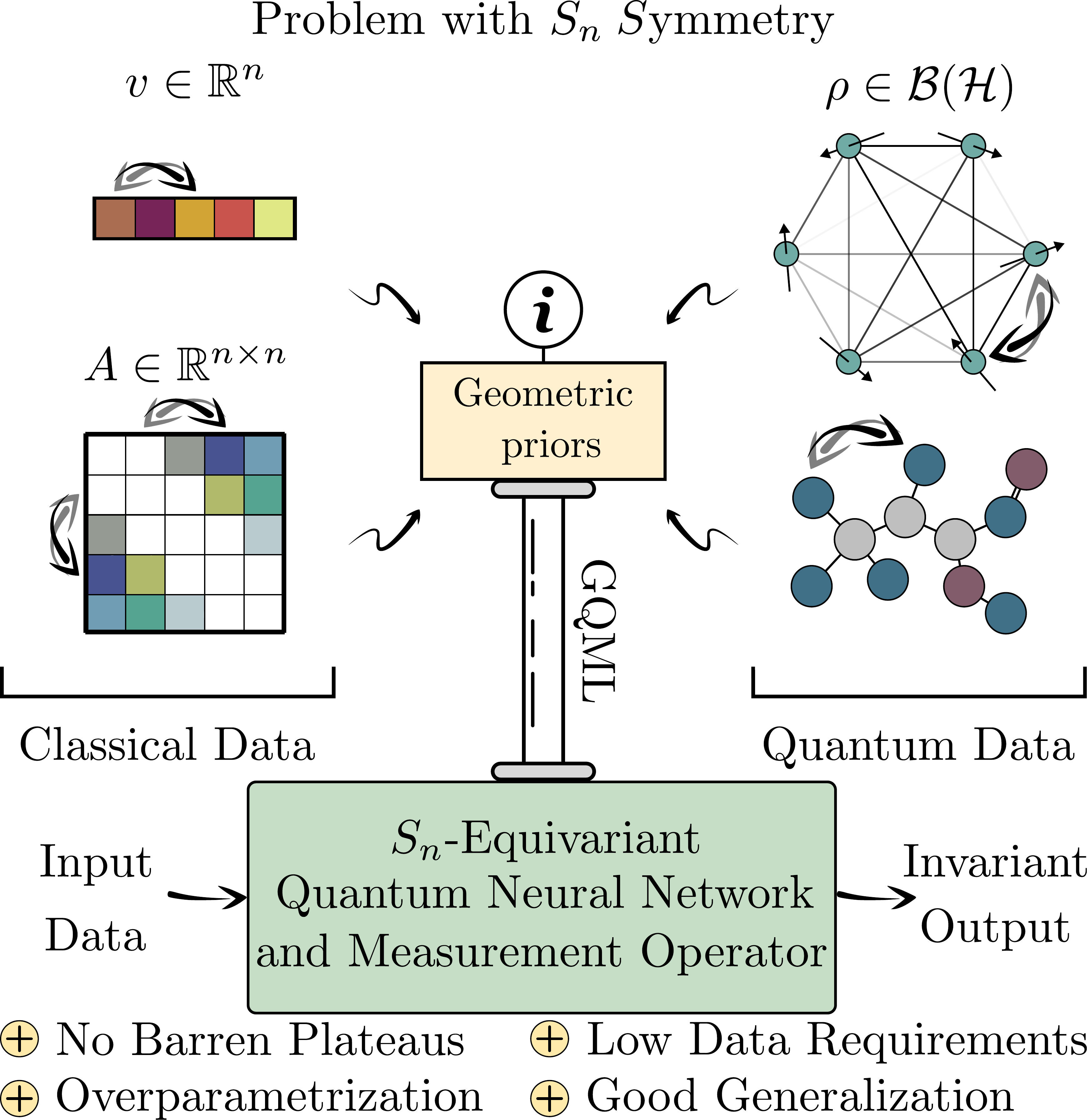}
    \caption{\textbf{GQML embeds geometric priors into a QML model.} Incorporating prior knowledge through $S_n$-equivariance heavily restricts the search space of the model. We show that such inductive biases lead to models that do not exhibit barren plateaus, can be efficiently overparametrized, and require small amounts of data to generalizing well.}
    \label{fig:fig1}
\end{figure}

Generally, a supervised quantum machine learning task can be phrased in terms of a data space $\mathcal{R}$ -a set of quantum states on some Hilbert space $\mathcal{H}$- and a real-valued label space $\mathcal{Y}$. We will assume $\mathcal{H}$ to be a tensor product of $n$ two-dimensional subsystems (qubits) and thus of dimension $d=2^n$. We are given repeated access to a training dataset $\SC = \{(\rho_i, y_i)\}_{i=1}^M$, where $\rho_i$ is sampled from $\mathcal{R}$ according to some probability $P$, and where $y_i\in\mathcal{Y}$. We further assume that the labels are assigned by some underlying (but unknown) function $f:\mathcal{R}\mapsto\mathcal{Y}$, that is, $y_i = f(\rho_i)$. We make no assumptions regarding the origins of $\rho_i$, meaning that these can correspond to classical data embedded in quantum states~\cite{havlivcek2019supervised,schuld2018supervised}, or to quantum data obtained from some quantum mechanical process~\cite{cong2019quantum,schatzki2021entangled,caro2021generalization}.

The goal is to produce a parametrized function $h_{\thv}:\mathcal{R}\mapsto\mathcal{Y}$ closely modeling the outputs of the unknown target $f$, where $\boldsymbol{\uptheta}$ are trainable parameters. That is, we want $h_{\thv}$ to accurately predict labels for the  data in the training set $\SC$ (low training error), as well as to predict the labels for new and previously unseen states  (small generalization error). We will focus on QML models that are parametrized through a QNN, a unitary channel $\UC_{\boldsymbol{\uptheta}}:\BC(\HC)\rightarrow\BC(\HC)$  such that $\UC_{\thv}(\rho)=U(\thv)\rho U(\thv)\ad$. Here, $\BC(\HC)$ denotes the space of bounded linear operators in $\HC$. Throughout this work we will restrict to $L$-layered QNNs 
\begin{equation}
\UC_{\thv}=\UC^L_{\theta_L}\circ\cdots\circ \UC^1_{\theta_1}\,,\,\,\, \text{where}\,\,\,\, \UC^l_{\theta_l}(\rho)=\er^{-\ir \theta_l H_l}\rho \er^{\ir \theta_l H_l}\,,
\end{equation}
for some Hermitian generators $\{H_l\}$, so that $U(\boldsymbol{\uptheta})=\prod_{l=1}^L \er^{-\ir \theta_l H_l}$. Moreover, we consider models that depend on a loss function of the form
\begin{equation}\label{eq_loss}
    \ell_{\thv}(\rho_i)=\Tr[\UC_{\thv}(\rho_i) O]\,,
\end{equation}
where $O$ is a Hermitian observable.  We quantify the training error via the so-called empirical loss, or training error, which  is defined as 
\begin{equation}\label{eq:empirical-loss}
    \widehat{\LC}(\thv)=\sum_{i=1}^M c_i\ell_{\thv}(\rho_i)\,.
\end{equation}
The model is trained by solving the optimization task $\argmin_{\thv} \widehat{\LC}(\thv)$~\cite{cerezo2020variationalreview}. Once a desired convergence in the optimization is achieved, the optimal parameters, along with the loss function $\ell_{\thv}$, are used to predict labels. For the case of binary classification, where $\mathcal{Y} = \{+1,-1\}$, one can choose $c_i := -\frac{y_i}{M}$. Then, if the measurement operator is normalized such that $\ell_{\thv}(\rho_i) \in [-1,1]$, this corresponds to the hinge loss, a standard loss function but not the only relevant one~\cite{janocha2017loss}) in machine learning.

We further remark that while Eq.~\eqref{eq:empirical-loss} approximates the error of the learned model, the true loss is defined as 
\begin{equation}\label{eq:gen-error}
    \LC(\thv) = \mathbb{E}_{\rho \sim P}[c(y)\ell_{\thv}(\rho)]\,.
\end{equation}
Here we have denoted the weights as $c(y)$ to make their dependency on the labels $y$ explicit. The difference between the true loss and the empirical one, known as the generalization error, is given by 
\begin{equation}\label{eq:generalization}
    {\rm gen}(\thv) = \lvert \LC(\thv) - \hat{\LC}(\thv) \rvert.
\end{equation}

We now turn to GQML, where the first step is identifying the underlying symmetries of the dataset, as this allows us to create suitable inductive biases for $h_{\thv}$. In particular, many problems of interest exhibit so-called label symmetry, i.e., the function $f$ produces labels that remain invariant under a set of operations on the inputs. Concretely, one can verify that such set of operations forms a group~\cite{larocca2022group}, which leads to the following definition. 
\begin{definition}[Label symmetries and $G$-invariance]
\label{def:label_symm}
Given a compact group $G$ and some unitary representation $R$ acting on quantum states $\rho$, we say $f$ has a label symmetry if it is $G$-invariant, i.e., if
\begin{equation}
f(R(g) \rho R(g)\ad) = f(\rho),\ \forall g\in G\,.
\end{equation}
\end{definition}
\noindent Here, we recall that a  representation is a mapping of a group into the space of invertible linear operators on some vector space (in this case the space of quantum states) that preserves the structure of the group~\cite{ragone2022representation}. Also, we note that some problems may have functions $f$ whose outputs change (rather than being invariant) in a way entirely determined by the action of $G$ on their inputs. While still captured by general GQML theory, these do not pertain to Definition~\ref{def:label_symm} and are not discussed further. Label invariance captures the scenario where the relevant information in $\rho$ is unchanged under the action of $G$.

Evidently, when searching for models $h_{\thv}$ that accurately predict outputs of $f$, it is natural to restrict our search to the space of models that respect the label symmetries of $f$. In this context, the theory of GQML provides a constructive approach to create $G$-invariant models, resting on the concept of equivariance~\cite{nguyen2022atheory}.
\begin{definition}[Equivariance]\label{def:equivariance}
We say that an observable $O$ is $G$-equivariant iff for all elements $g\in G$, $[O,R(g)]=0$. We say that a layer $\UC^l_{\theta_l}$ of a QNN is $G$-equivariant iff it is generated by a $G$-equivariant Hermitian operator. 
\end{definition}
By the previous definition, $G$-equivariant layers are maps that commute with the action of the group
\begin{equation}\label{eq:equivariance-U}
    \UC^l_{\th_l}(R(g) \rho R(g)^\dagger) = R(g) \UC^l_{\th_l}(\rho) R(g)^\dagger\,.
\end{equation}
Definition~\ref{def:equivariance} can be naturally extended to QNNs.

\begin{definition}[Equivariant QNN]\label{def:equivariant_QNN}
We say that a $L$-layered QNN is $G$-equivariant iff each of its layers is $G$-equivariant.
\end{definition}

Altogether, equivariant QNNs and measurement operators provide a recipe to design invariant models, i.e. models that respect the label symmetries.
Akin to their classical machine learning counterparts~\cite{cohen2016group,bronstein2021geometric,  kondor2018generalization, bogatskiy2022symmetry,bekkers2018roto}, such GQML models consist in a composition of many equivariant operations (realized by the $L$ layers of the equivariant QNN) and an invariant one (realized by the measurement of the equivariant observable)~\cite{nguyen2022atheory}.  Furthermore, model invariance extends to the loss function itself, as captured by the following Lemma.

\begin{lemma}[Invariance from equivariance]\label{lem:inv}
A loss function of the form in Eq.~\eqref{eq_loss} is $G$-invariant if its composed of a $G$-equivariant QNN and measurement.
\end{lemma}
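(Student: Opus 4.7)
The plan is to unpack the definitions and then use the cyclicity of the trace. Let $g \in G$ be arbitrary, and write $R \equiv R(g)$ for brevity. The goal is to show that
\begin{equation}
\ell_{\thv}(R\rho R\ad) = \ell_{\thv}(\rho)\,,
\end{equation}
which by Eq.~\eqref{eq_loss} amounts to showing $\Tr[\UC_{\vec{\theta}}(R\rho R\ad)\, O] = \Tr[\UC_{\vec{\theta}}(\rho)\, O]$.

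First I would establish equivariance of the full QNN from equivariance of its layers. By Definition~\ref{def:equivariant_QNN}, every layer $\UC^l_{\theta_l}$ is $G$-equivariant, so each satisfies Eq.~\eqref{eq:equivariance-U}. Applying this identity iteratively along the composition $\UC_{\vec{\theta}}=\UC^L_{\theta_L}\circ\cdots\circ \UC^1_{\theta_1}$ shows
\begin{equation}
\UC_{\vec{\theta}}(R\rho R\ad) = R\, \UC_{\vec{\theta}}(\rho)\, R\ad\,.
\end{equation}
Equivalently, at the level of the unitary, $[U(\thv),R]=0$, which also follows from the fact that the generators $H_l$ commute with $R$ (so does any product of their exponentials).

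Next I would use that the measurement operator $O$ is $G$-equivariant, i.e., $[O,R]=0$ by Definition~\ref{def:equivariance}. Combining this with the previous step and the cyclicity of the trace,
\begin{equation}
\begin{aligned}
\ell_{\thv}(R\rho R\ad) &= \Tr[R\, \UC_{\vec{\theta}}(\rho)\, R\ad\, O] \\
&= \Tr[\UC_{\vec{\theta}}(\rho)\, R\ad O R] \\
&= \Tr[\UC_{\vec{\theta}}(\rho)\, O] = \ell_{\thv}(\rho)\,,
\end{aligned}
\end{equation}
where in the penultimate equality I used $R\ad O R = O$, a direct consequence of $[O,R]=0$ together with the unitarity $R\ad R=\id$. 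Since $g\in G$ was arbitrary, this establishes $G$-invariance of $\ell_{\thv}$.

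I do not anticipate any real obstacle here: the statement is essentially an exercise in composing definitions. The only mild care needed is to verify that equivariance of individual layers does close under composition, which follows because if $\UC^a, \UC^b$ both commute with the conjugation action $\rho\mapsto R\rho R\ad$, then so does $\UC^b\circ\UC^a$. Everything else reduces to cyclicity of the trace.
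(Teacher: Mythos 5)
Your proof is correct and follows essentially the same route as the paper's: apply equivariance of the QNN to pull $R(g)$ outside, use cyclicity of the trace, and then absorb $R(g)\ad O R(g)=O$ by equivariance of the measurement. The only difference is that you spell out explicitly that layer-wise equivariance composes to equivariance of the full QNN, a step the paper leaves implicit.
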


A proof of this Lemma along with that of the following Lemmas and Theorems are presented in Supplementary Methods 2 and 3.


\begin{figure}[t]
    \centering
    \includegraphics[width=0.8\linewidth]{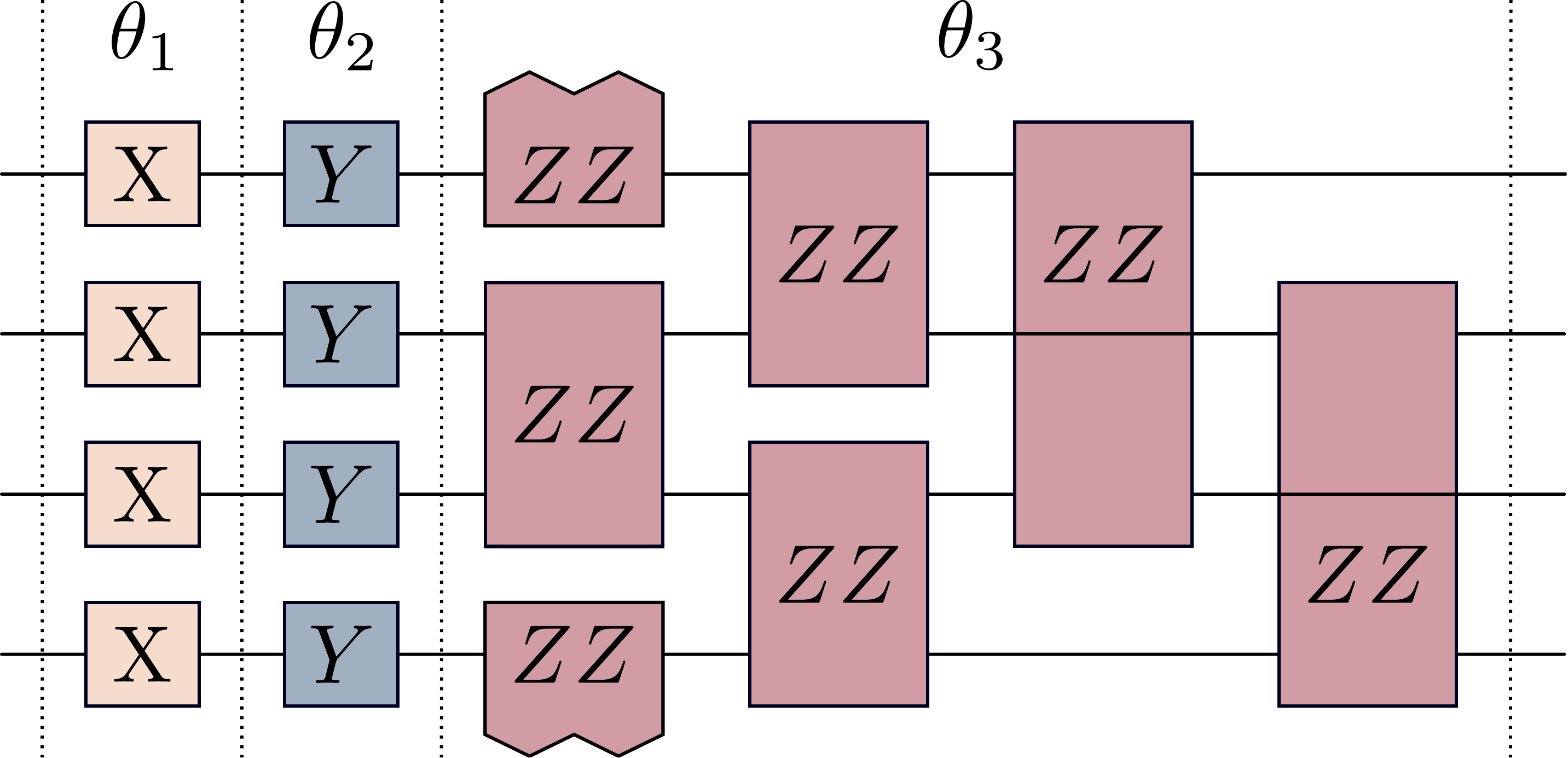}
    \caption{\textbf{Quantum circuit for an $S_n$-equivariant QNN.} Each layer of the QNN is obtained by exponentiation of a generator from the set $\GC$ in Eq.~\eqref{eq:generators-main}. Here we show a circuit with $L=3$ layers acting on $n=4$ qubits. Single-qubit blocks indicate a rotation about the $x$ or $y$ axis, while two-qubit blocks denote entangling gates generated by a $ZZ$ interaction. All colored gates between dashed horizontal lines share the same trainable parameter $\theta_l$. }
    \label{fig:fig-ansatz}
\end{figure}

\subsection*{$S_n$-Equivariant QNNs and measurements}

In the previous section we have described how to build generic G-invariant models. We now specialize to the case where $G$ is the symmetric group $S_n$, and where $R$ is the qubit-defining representation of $S_n$, i.e., the one permuting qubits which for any $\pi\in S_n$ acts as  
\begin{equation}\label{eq:qubit_permut}
R(\pi)\bigotimes_{i=1}^n \ket{\psi_i} = \bigotimes_{i=1}^n \ket{\psi_{\pi^{-1}(i)}}\,.
\end{equation}

\begin{figure*}[h!t]
    \centering
    \includegraphics[width=1\linewidth]{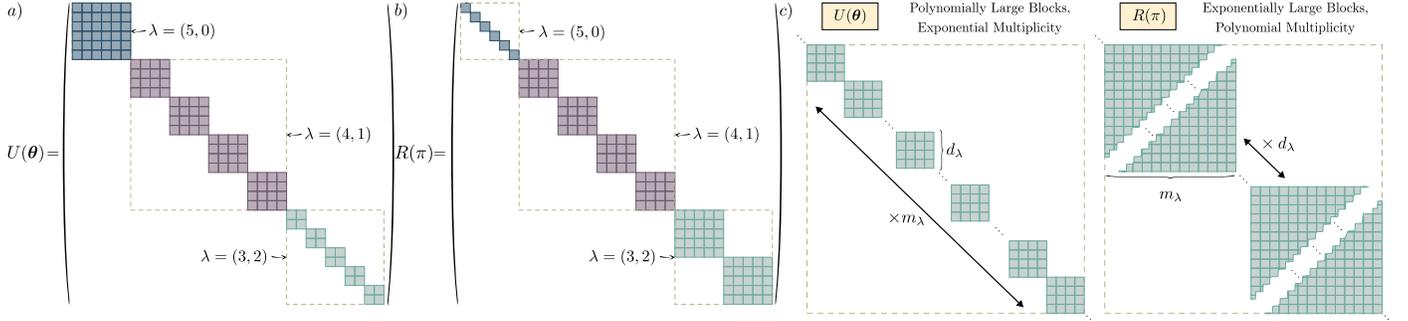}
    \caption{\textbf{Representation theory and $S_n$-equivariance.} Using tools from representation theory we find that the $S_n$-equivariant  QNN $U(\thv)$ and the representation of the group elements $R(\pi)$ -for any $\pi\in S_n$- admit an irrep block  decomposition as in Eq.~\eqref{eq:commutator} and  Eq.~\eqref{eq:Isotypic}, respectively. The irreps can be labeled with a single parameter $\lambda=(n-m,m)$ where $m=0,1,\ldots,\lfloor\frac{n}{2}\rfloor$. For a system of $n=5$ qubits, we show in  a) the block diagonal decomposition for $U(\thv)$ and in b) the decomposition of $R(\pi)$ as a representation of $S_5$. The dashed boxes denote the isotypic components labeled by $\lambda$. c)  As $n$ increases, $U(\thv)$ has a block diagonal decomposition which contains polynomially large blocks repeated a (potentially) exponential number of times. In contrast,  the block decomposition of $R(\pi)$ (for any $\pi\in S_n$)  contains blocks that can be exponentially large but that are only repeated a polynomial number of times.  }
    \label{fig:fig2}
\end{figure*}

Following Definitions~\ref{def:equivariance} and~\ref{def:equivariant_QNN}, the first step towards building $S_n$-equivariant QNNs is defining $S_n$-equivariant generators for each layer. In the Methods section we describe how such operators can be obtained, but here we will restrict our attention to the following set of generators
\begin{equation}\label{eq:generators-main}
    \GC=\left\{\frac{1}{n}\sum_{j=1}^n X_j,\frac{1}{n}\sum_{j=1}^n Y_j,\frac{2}{n(n-1)}\sum_{k<j} Z_jZ_k \right\}\,.
\end{equation}
Note that there is some freedom in the choice of generators. Any two sums over two distinct single qubit Pauli operators (the first two generators) plus a sum over pairs of the remaining Pauli operator (the third generator) suffices and we choose the above set without loss of generality. In Fig.~\ref{fig:fig-ansatz} we show an example of an $L=3$ layered $S_n$-equivariant QNN acting on $n=4$ qubits. 
While the single-qubit rotations generated by $\mathcal{G}$ are readily achievable in most quantum computing platforms, the collective $ZZ$ interactions are best suited to architectures allowing for reconfigurable connectivity~\cite{grzesiak2020efficient,pino2021demonstration,bluvstein2022quantum} or platforms that implement mediated all-to-all interactions~\cite{pedrozo2020entanglement,marciniak2022optimal}.
In fact, such interactions are referred to as one-axis twisting~\cite{kitagawa1993squeezed} in the context of spin squeezing~\cite{wineland1992spin} and form the basis of many quantum sensing protocols.

In addition, we will consider observables of the following form

\begin{equation}\label{eq:measurement-op}
    \MC=\left\{\frac{1}{n}\sum_{j=1}^n \chi_j, \frac{2}{n(n-1)}\sum_{k<j;j=1}^n \chi_j\chi_k,\prod_{j=1}^n\chi_j\right\}\,,
\end{equation}
where $\chi$ is a (fixed) Pauli matrix. It is straightforward to see that any $H_l\in\GC$ and $O\in \MC$ will commute with  $R(\pi)$ for any $\pi\in S_n$. We note that one could certainly consider other observables as well.

We now leverage tools from representation theory to understand and unravel the underlying structure of $S_n$-equivariant QNNs and measurement operators. The previous will allow us to derive, in the next section, theoretical guarantees for these GQML models.

One of the most notable results from representation theory is that  a given  finite dimensional representation  of a group decomposes into an orthogonal direct sum of fundamental building-blocks known as irreducible representations (irreps). As further explained in the Methods, the qubit-defining representation takes, under some appropriate global change of basis (which we denote with $\cong$), the block-diagonal form
\begin{equation}\label{eq:Isotypic}
R(\pi\in S_n)   \cong  \bigoplus_{\lambda}\bigoplus_{\mu=1}^{d_\lm} r_\lm(\pi)=\bigoplus_{\lm}r_\lm(\pi)\otimes\id_{d_\lm}\,.
\end{equation}
Here $\lm$ labels the irreps of $S_n$ and $r_\lm$ is the corresponding irrep itself, which appears $d_{\lm}$ times. The collection of these repeated irreps is called an isotypic component. Crucially, the only irreps appearing in $R$ correspond to two-row Young diagrams (see Methods) and can be parametrized by a single non-negative integer $m$, as $\lm\equiv\lm(m)=(n-m,m)$, where $m=0,1,\ldots,\lfloor\frac{n}{2}\rfloor$. It can be shown that
\begin{equation}\label{eq:dim-mult-lambda}
\begin{split}
d_\lm&=n-2m+1\,, \quad\text{ and} \\
m_\lm&=\frac{n!(n-2m+1)!}{(n-m+1)!m!(n-2m)!} \,
\end{split}
\end{equation}
where again $d_\lm$ is the number of times the irrep appears and $m_\lm$ is the dimension of the irrep itself. Note that every $d_\lm$ is in $\OC(n)$, whereas some $m_\lm$ can grow exponentially with the number of qubits. For instance, if $n$ is even and $m=n/2$, one finds that $m_\lambda = \Omega(4^n/n^2)$. We finally note that Eq.~\eqref{eq:Isotypic} implies $\sum_\lm m_\lm d_\lm = 2^n$.

Given the block-diagonal structure of $R$, $S_n$-equivariant unitaries and measurements must necessarily take the form
\begin{equation}\label{eq:commutator}
    U(\thv) \cong \bigoplus_{\lm} \id_{m_\lm} \otimes U_\lm(\thv), \quad \text{and} \quad
    O \cong \bigoplus_{\lm} \id_{m_\lm} \otimes O_\l.
\end{equation}
That is, both $U(\thv)$ and $O$ decompose into a direct sum of $d_\lm$-dimensional blocks repeated $m_\lm$ times (with $m_\lm$ called the multiplicity) on each isotypic component $\lm$. This decomposition is illustrated in Fig.~\ref{fig:fig2}. 

Let us highlight several crucial implications of the block diagonal structure arising from $S_n$-equivariance. First and foremost, we note that, under the action of an $S_n$-equivariant QNN, the Hilbert space decomposes as
\begin{equation}\label{eq:iso_space}
    \mathcal{H} \cong \bigoplus_{\lm}\bigoplus_{\nu=1}^{m_\lambda} \HC_\lm^\nu,
\end{equation}
where each $\HC_\lm^\nu$ denotes a $d_\lambda$-dimensional invariant subspace. Moreover, one can also see that when the QNN acts on an input quantum state as   $\UC_{\thv}(\rho)=U(\thv)\rho U(\thv\ad)$,  it can only access the information in $\rho$ which is contained in the invariant subspaces $\HC_\lm^\nu$ (see also~\cite{nguyen2022atheory}). This means that to solve the learning task, we require two ingredients: i) the data must encode the relevant information required for classification into these subspaces~\cite{ragone2022representation,nguyen2022atheory}, and ii) the QNN must be able to accurately process the information within each $\HC_\lm^\nu$. As discussed in the Methods, we can guarantee that the second condition will not be an issue, as the set of generators in Eq.~\eqref{eq:generators-main} is universal within each invariant subspace, i.e., the QNN can map any state in $\HC_\lm^\nu$ to any other state in $\HC_\lm^\nu$ (see also Ref.~\cite{albertini2018controllability}).

\begin{figure}[t]
    \centering
    \includegraphics[width=0.85\linewidth]{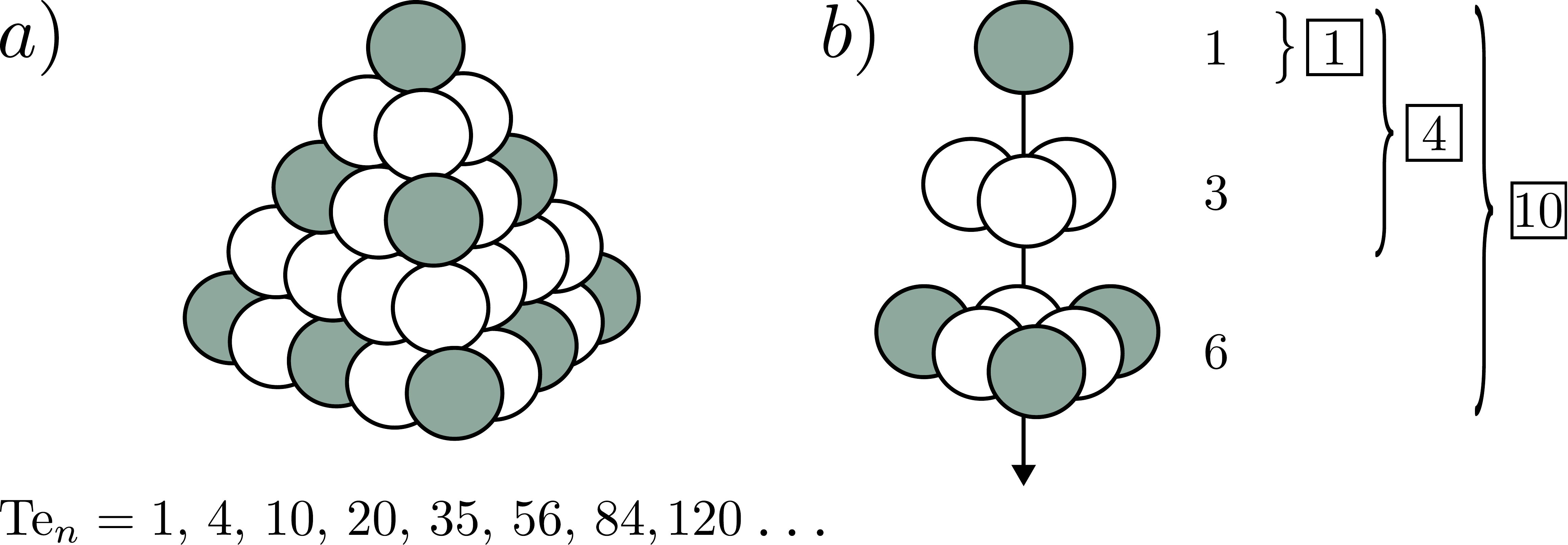}
    \caption{\textbf{Tetrahedral numbers.} a) The Tetrahedral numbers ${\rm Te}_n$ are obtained by  counting how many spheres can be stacked in the configuration of a tetrahedron (triangular base pyramid) of height $n$. b) One can also compute ${\rm Te}_n$ as the sum of consecutive triangular numbers, which count how many objects (e.g., spheres) can be arranged in an equilateral triangle. }
    \label{fig:figte}
\end{figure}

A second fundamental implication of Eq.\eqref{eq:commutator} is that the manifold of equivariant unitaries is of low-dimension. We make this explicit in the following lemma.

\begin{lemma}[Dimension of $S_n$-equivariant unitaries]\label{lem:poly}
The submanifold of $S_n-$equivariant unitaries is of dimension equal to the Tetrahedral numbers ${\rm Te}_{n+1}=\binom{n+3}{3}$ (see Fig.~\ref{fig:figte}), and therefore on the order of $\Theta(n^3)$.

\end{lemma}

Crucially, Lemma~\ref{lem:poly} shows that the equivariance constraint limits the degrees of freedom in the QNN (and concomitantly in any observable) from $4^n$ to only polynomially many.

\subsection*{Absence of barren plateaus in $S_n$-equivariant QNNs}

Barren plateaus have been recognized as one of the main challenges to overcome in order to guarantee the success of QML models using QNNs~\cite{cerezo2022challenges}. When a model exhibits a barren plateau, the loss landscape becomes, on average, exponentially flat and featureless as the problem size increases~\cite{mcclean2018barren,cerezo2020cost,sharma2020trainability,holmes2020barren,holmes2021connecting,cerezo2020impact,marrero2020entanglement,patti2020entanglement,uvarov2020barren,thanasilp2021subtleties,larocca2021diagnosing,wang2020noise,wiersema2020exploring}.  
This severely impedes its trainability, as  one needs to spend an exponentially large amount of resources  to correctly estimate a loss minimizing direction. 
Issues of barren plateaus arise primarily due to the structure of the models (including the choice of QNN, the input state and the observables)  employed~\cite{mcclean2018barren,cerezo2020cost,sharma2020trainability,holmes2020barren,holmes2021connecting,cerezo2020impact,marrero2020entanglement,patti2020entanglement,uvarov2020barren,thanasilp2021subtleties,larocca2021diagnosing,wiersema2020exploring} 
but can also be caused solely by effects of noise~\cite{wang2020noise}. 
In the rest of this section, we will only be concerned with the former type of barren plateaus, that is the most studied.

Recently, a great deal of effort has been put forward towards creating strategies capable of mitigating the effect of barren plateaus~\cite{grant2019initialization,skolik2020layerwise,sauvage2021flip,sack2022avoiding,rad2022surviving,broers2021optimization,liu2021parameter,friedrich2022avoiding,kulshrestha2022beinit,mele2022avoiding,zhang2022gaussian,grimsley2022adapt,cerezo2020variational,kieferova2021quantum}. While these are promising and have shown moderate success, the `holy grail' is identifying architectures which are immune to barren plateaus altogether, and thus enjoy trainability guarantees. Examples of such architectures are shallow hardware efficient ansatzes~\cite{cerezo2020cost}, quantum convolutional neural networks~\cite{pesah2020absence}, or the transverse field Ising model Hamiltonian variational ansatz~\cite{wiersema2020exploring,larocca2021diagnosing}. Here, we prove that another architecture can be added to this list: $S_n$-equivariant QNNs.

When studying barren plateaus, one typically analyzes the variance of the empirical loss function partial derivatives, $\partial_\mu \hat{\LC}(\thv)=\partial \hat{\LC}(\thv)/\partial\theta_\mu$, where  $\theta_\mu\in \thv$. We say that there is a barren plateau in the $\theta_\mu$ direction if $\mbb{E}_{\thv}[ \partial_\mu \hat{\LC}(\thv) ] = 0$ and $\Var_{\thv}[\partial_\mu \hat{\LC}(\thv)]$ is exponentially vanishing.

Before stating our main results, we introduce a bit of notation. 
Let us define $Q_\lambda^\nu$ to be the operator that maps vectors from $\HC$ to $\HC_\lambda^\nu$, such that $(Q_\lambda^\nu)^{\dagger} Q_\lambda^\nu$ realizes a projection onto $\HC_\lambda^\nu$ (see Supplementary Methods 4 for additional details).  
Given a matrix $B\in\mathbb{C}^{d \times d}$, we will denote its restriction to $\HC_\lambda^\nu$ as
\begin{equation}\label{eq:restriction}
  B_\lambda^{\nu} = Q_\lambda^\nu B (Q_\lambda^\nu)\ad\,,
\end{equation}
with  $B_\lm^\nu \in \mathbb{C}^{d_\lm \times d_\lm}$. We remark that the restriction of $S_n$-equivariant generators is independent of the $\nu$ multiplicity index (see Eq.~(13)). On the other hand, the restriction of non-equivariant operators (such as the input states $\rho_1$) are not independent of $\nu$, meaning that that the set composed of all the restrictions $\rho_\lambda^\nu$ contain an exponentially large amount of non-redundant information that the QNN can act on (see also~\cite{nguyen2022atheory}).

Denoting the weighted average of the input states as $\sigma =\sum_{i=1}^M c_i\rho_i$, we find:

\begin{theorem}[Variance of partial derivatives]\label{thm:main1}
     Let $\UC_{\thv}$ be an $S_n$-equivariant QNN, with  generators in $\GC$, and $O$ an $S_n$-equivariant measurement operator from $\MC$. Consider an empirical loss $\hat{\LC}(\thv)$ as in Eq.~\eqref{eq:empirical-loss}. Assuming a circuit depth $L$ such that the QNN forms independent 2-designs on each isotypic block, we  have $\langle \partial_\mu \hat{\LC}(\thv) \rangle_{\thv} = 0$, and 
     \small
    \begin{equation}\label{eq:var_main}
    \Var_{\thv}[ \partial_{\mu} \hat{\LC}(\thv)]\!=\!\sum_{\lambda}\frac{2d_\lambda}{(d_\lambda^2-1)^2}\Delta(H_{\mu,_\lambda})\Delta(O_{\lambda}) \Delta\left(\sum_{\nu=1}^{m_{\lambda}}\sigma_\lm^\nu\right)\,.
\end{equation}
\normalsize
Here, $\Delta(B) = \Tr[B^2]-\frac{\Tr[B]^2}{\dim(B)}$.
\end{theorem}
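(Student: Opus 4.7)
The plan is to exploit the isotypic-block decomposition of Eq.~\eqref{eq:commutator} to rewrite the empirical loss as a sum of independent per-block contributions, and then to apply the standard 2-design gradient-variance calculation on each block.

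Concretely, I would absorb the data weights into $\sigma := \sum_{i} c_i \rho_i$ so that $\hat{\LC}(\thv) = \Tr[U(\thv) \sigma U(\thv)^\dagger O]$. Since $U(\thv)$ and $O$ commute with $R$, they take the form $\bigoplus_\lambda \id_{m_\lambda} \otimes U_\lambda(\thv)$ and $\bigoplus_\lambda \id_{m_\lambda} \otimes O_\lambda$, so only the diagonal isotypic blocks of $\sigma$ contribute to the trace. Carrying out the partial trace over the $m_\lambda$-dimensional multiplicity factor yields
\begin{equation*}
\hat{\LC}(\thv) \;=\; \sum_\lambda \Tr\!\bigl[U_\lambda(\thv)\, \tilde{\sigma}_\lambda\, U_\lambda(\thv)^\dagger\, O_\lambda \bigr],
\end{equation*}
where $\tilde{\sigma}_\lambda := \sum_{\nu=1}^{m_\lambda} \sigma_\lambda^\nu$ is precisely the object whose $\Delta$ appears on the right-hand side of Eq.~\eqref{eq:var_main}. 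Next I split the QNN at the $\mu$-th layer as $U_\lambda = U_{B,\lambda}\, e^{-i \theta_\mu H_{\mu,\lambda}}\, U_{A,\lambda}$ and differentiate:
\begin{equation*}
\partial_\mu \hat{\LC}(\thv) \;=\; -i \sum_\lambda \Tr\!\bigl[U_{B,\lambda}^\dagger O_\lambda U_{B,\lambda}\, [H_{\mu,\lambda},\, U_{A,\lambda} \tilde{\sigma}_\lambda U_{A,\lambda}^\dagger]\bigr].
\end{equation*}
The vanishing of the mean is then immediate: the 1-design property on each block gives $\Ebb_{U_{B,\lambda}}[U_{B,\lambda}^\dagger O_\lambda U_{B,\lambda}] = (\Tr[O_\lambda]/d_\lambda)\,\id$, so each summand becomes proportional to the trace of a commutator, which is zero.

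For the variance I would invoke the assumption that the $\{U_{A,\lambda}, U_{B,\lambda}\}_\lambda$ form mutually independent 2-designs, which makes the cross-terms $\lambda \neq \lambda'$ in $\Ebb[(\partial_\mu \hat{\LC})^2]$ drop out and reduces the calculation to $\sum_\lambda \Var[(\partial_\mu \hat{\LC})_\lambda]$. On each block I would then apply the standard Weingarten computation for the variance of a 2-design gradient (in the form popularized by the barren-plateau literature), with state $\tilde{\sigma}_\lambda$, observable $O_\lambda$, and generator $H_{\mu,\lambda}$, all living on a $d_\lambda$-dimensional space. This directly produces the prefactor $2 d_\lambda / (d_\lambda^2-1)^2$ multiplying $\Delta(H_{\mu,\lambda})\,\Delta(O_\lambda)\,\Delta(\tilde{\sigma}_\lambda)$; summing over $\lambda$ recovers Eq.~\eqref{eq:var_main}.

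The main technical obstacle is justifying the hypothesis that the per-block unitaries are \emph{independent} 2-designs, since the various $U_\lambda(\thv)$ are in reality all functions of the same parameter vector $\thv$. The route I would take is to combine the within-block universality of $\GC$ (noted in the Methods and following from the controllability result of Ref.~\cite{albertini2018controllability}) with the Schur--Weyl description of the commutant: the commutant of $R(S_n)$ inside $U(d)$ is the product group $\prod_\lambda U(d_\lambda)$, and Haar measure on this product factorizes as the product of Haar measures on each factor. Thus once the circuit is deep enough to drive the induced measure on the commutant to Haar (or to a 2-design thereof), the per-block averages decouple automatically, and every remaining step reduces to routine Haar integration on $U(d_\lambda)$.
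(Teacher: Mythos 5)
Your proposal is correct and follows essentially the same route as the paper's proof: block-decompose via the isotypic structure, use the independence of the per-block $2$-designs to kill the $\lambda\neq\lambda'$ cross terms and to make the mean vanish (trace of a commutator), and apply the standard Weingarten integration on each $d_\lambda$-dimensional block to obtain the prefactor $2d_\lambda/(d_\lambda^2-1)^2$. The only (harmless) difference is that you perform the partial trace over the multiplicity space at the outset, working directly with $\tilde{\sigma}_\lambda=\sum_\nu\sigma_\lambda^\nu$, whereas the paper keeps the individual $\rho_{i,\lambda}^\nu$ and collapses the double sum over $i,j,\nu,\nu'$ into $\Delta\bigl(\sum_\nu\sigma_\lambda^\nu\bigr)$ only at the final step.
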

\noindent In the Methods we present a sketch of the proof for Theorem~\ref{thm:main1}, as well as its underlying assumptions.

We remark that while we have derived  Theorem~\ref{thm:main1} for $S_n$-equivariant QNNs and measurement operators, given some general finite-dimensional compact group $G$, the form of Eq.~\eqref{eq:var_main} is valid provided that one uses a $G$-equivariant QNN that is universal with each invariant subspace. In this case, the summation over $\lambda$ will run over the irreps of the representation of $G$.

Let us now analyze each term in Eq.~\eqref{eq:var_main} to identify potential sources of untrainability. First, let us consider the prefactors $\frac{2d_\lambda}{(d_\lambda^2-1)^2}$. From Eq.~\eqref{eq:dim-mult-lambda} we can readily see that $\frac{2d_\lambda}{(d_\lambda^2-1)^2}\in\Omega(\frac{1}{n^3})$ for any $\lm$. Next, it is convenient to separate the two remaining potential sources of barren plateaus into two categories: i) those that are QNN or measurement dependent, $\Delta(H_{\mu,_\lambda})$ and $\Delta(O_{\lambda})$, and ii) those that are dataset-dependent, $\Delta(\sum_{\nu}\sigma_\lm^\nu)$. This identification commonly appears when analyzing the absence of barren plateaus  (see Refs.~\cite{cerezo2020cost,pesah2020absence,larocca2021diagnosing,thanasilp2021subtleties,liu2021presence}) and allows one to study how the architecture and dataset individually affect the trainability. In what follows, we will say that some architecture does not induce  barren plateaus if the terms that are QNN or measurement dependent are not exponentially vanishing. 

Using tools from representation theory we can  obtain the following exact expressions for $S_n$-equivariant operators.
\begin{theorem}\label{thm:main2}
    Let $A$ be a $S_n$-equivariant operator.
\begin{equation}
        \begin{cases}
     \text{If } A=\sum_{j=1}^n \chi_j\,, \quad  \text{then } 
        \Delta(A_{\lm}) = 2\binom{d_\lm+1}{3}\,,\\
        \text{If }  A=\sum_{k<j} \chi_j\chi\,, \quad \text{then } 
        \Delta(A_{\lm}) = \frac{8}{3}\binom{d_\lm+2}{5}\,,\\
        \text{If } A=\prod_{j=1}^n \chi_j\,, \quad \text{then } \Delta(A_\lambda) = 
        \frac{d_\lm^2-1+n\,{\rm mod}2}{d_\lm}\,, 
    \end{cases}
\end{equation}
where $\chi\in\{\X,Y,Z\}$.
\end{theorem}
\noindent In Supplementary Methods 6 we also derive formulas for the case of $A$ being $k$-body operators.


Let us review the implications of Theorem~\ref{thm:main2}. First, note that all elements of our gate-set $\GC$ and measurement-set $\MC$ are of the form in Theorem~\ref{thm:main2}, and therefore belong in $\Omega(d_\lm)$. This follows from the fact that the binomial coefficient $\binom{n+a}{b}$ scales as a polynomial of degree $b$ in $n$. Since $d_\lm$ itself is in $\Theta(n)$ (see Eq.~\eqref{eq:dim-mult-lambda}), for all $\lm$ and $\mu$
\begin{equation}
\Delta(O_\lambda)\text{ and }\Delta(H_{\mu,\lambda}) \in \Omega(n)\,.
\end{equation}

Hence, combining this result with Theorem~\ref{thm:main1} allows us to argue that $S_n$-equivariant QNNs do not induce barren plateaus.
\begin{corollary}\label{cor:main1}
    Under the same assumptions as Theorem~\ref{thm:main1}, it follows that, if $\Delta(\sum_{\nu=1}^{m_{\lambda}}\sigma_\lm^\nu)\in\Omega\left(1/\poly(n)\right)$, then the empirical loss function satisfies
    \begin{equation}
        \Var_{\thv}[\partial_\mu \hat{\LC}]\in\Omega\left(\frac{1}{\poly(n)}\right)\,.
    \end{equation}
\end{corollary}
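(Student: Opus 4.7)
The plan is to use the exact variance formula from Theorem~\ref{thm:main1} and lower-bound the sum over $\lambda$ by a single carefully chosen term. The starting observation is that every summand in Eq.~\eqref{eq:var_main} is non-negative: each $\Delta(B) = \Tr[B^2] - \Tr[B]^2/\dim(B)$ equals $\dim(B)$ times the variance of the eigenvalues of the Hermitian matrix $B$ about their mean (by Cauchy--Schwarz), and the prefactor $2d_\lambda/(d_\lambda^2-1)^2$ is manifestly positive. Therefore $\Var_{\vec{\theta}}[\partial_\mu \hat{\LC}]$ is bounded below by the contribution of any single $\lambda$ we wish to single out, which reduces the problem to controlling one well-chosen term.

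My next step would be to select the isotypic component with $m=0$, i.e.\ $\lambda=(n,0)$, for which Eq.~\eqref{eq:dim-mult-lambda} yields the largest possible irrep dimension $d_\lambda = n+1$. For this $\lambda$, the prefactor $2d_\lambda/(d_\lambda^2-1)^2$ is $\Theta(1/n^3)$. I would then substitute each $H_\mu \in \GC$ and each $O \in \MC$ into Theorem~\ref{thm:main2}, carefully carrying through the $1/n$ and $2/(n(n-1))$ normalizations that are baked into the definitions of $\GC$ and $\MC$. At $d_\lambda = n+1$, the binomials $2\binom{d_\lambda+1}{3}$ and $\tfrac{8}{3}\binom{d_\lambda+2}{5}$ scale as $\Theta(n^3)$ and $\Theta(n^5)$ respectively; once divided by $n^2$ or $n^2(n-1)^2$, each restricted quantity $\Delta(H_{\mu,\lambda})$ and $\Delta(O_\lambda)$ is $\Theta(n)$, hence $\Omega(1/\poly(n))$. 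Multiplying the four factors — a $\Theta(1/n^3)$ prefactor, a $\Theta(n)$ generator contribution, a $\Theta(n)$ measurement contribution, and the assumed $\Omega(1/\poly(n))$ dataset contribution — yields that the $\lambda=(n,0)$ summand, and therefore the full variance, is $\Omega(1/\poly(n))$.

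The main obstacle is verifying that no pairing of a generator in $\GC$ with an observable in $\MC$ degenerates at the chosen $\lambda$, since the explicit normalization factors could in principle cancel the polynomial growth of the $\Delta$ values. The only genuinely delicate case is the global parity observable $O=\prod_j X_j$, for which Theorem~\ref{thm:main2} gives $\Delta(O_\lambda) = (d_\lambda^2-1+n\,{\rm mod}\,2)/d_\lambda$; a short case split in the parity of $n$ confirms that at $d_\lambda = n+1$ one obtains $(n^2+2n)/(n+1) \in \Theta(n)$ for $n$ even and $n+1$ for $n$ odd, so the contribution never vanishes. Beyond this bookkeeping subtlety, the argument is a direct substitution of Theorem~\ref{thm:main2} into Theorem~\ref{thm:main1}.
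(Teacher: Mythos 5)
Your proposal is correct and follows essentially the same route as the paper: the paper likewise obtains the corollary by substituting the exact expressions of Theorem~\ref{thm:main2} into Eq.~\eqref{eq:var_main}, noting that the prefactor is $\Omega(1/n^3)$ and that $\Delta(H_{\mu,\lm})$ and $\Delta(O_\lm)$ do not vanish, and using the non-negativity of every summand; your version is in fact slightly more careful, since you explicitly carry the $1/n$ and $2/(n(n-1))$ normalizations of $\GC$ and $\MC$ (which Theorem~\ref{thm:main2} omits) and check the parity observable by hand. The only caveat is the quantifier on $\lm$ in the hypothesis: you instantiate it at $\lm=(n,0)$, which matches the paper's intent, but if the dataset condition were only assumed at some other irrep one would have to re-verify the generator and measurement factors there, and these vanish identically on the smallest blocks (e.g.\ $\Delta(A_\lm)=0$ for the two-body operators when $d_\lm\leq 2$) --- a subtlety the paper's own discussion also glosses over.
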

We note that a crucial requirement for Corollary~\ref{cor:main1} to hold is that  $\Delta(\sum_{\nu}\sigma_\lm^\nu)$ needs to be, at most, polynomially vanishing. 
In Sec.~\ref{sec:trainable}, we identify cases of datasets leading to trainability but also to untrainability. Finally, we note that as discussed in Supplementary Methods 9,  Corollary~\ref{cor:main1} is sufficient to guarantee that the loss function does not exhibit the narrow gorge phenomenon, whereby the minima of the loss occupy an exponentially small volume of parameter space~\cite{arrasmith2021equivalence}. In other words, we show that absence of barren plateau implies absence of narrow gorges and loss function anti-concentration.

\subsection*{Efficient overparametrization}

Absence of barren plateaus is a necessary, but not sufficient, condition for trainability, as there could be other issues compromising the parameter optimization. In particular, it has been shown that quantum landscapes can exhibit a large number of local minima~\cite{bittel2021training,anschuetz2022beyond,fontana2022nontrivial}. As such, here we consider a different aspect of the trainability of $S_n$-equivariant QNNs: their ability to converge to global minima. For this purpose, we find it convenient to recall the concept of overparametrization.

Overparametrization denotes a regime in machine learning where models have a capacity much larger than that necessary to represent the distribution of the training data. For example, when the number of parameters is greater than the number of training points. Models operating in the overparametrized regime have seen tremendous success in classical deep learning, as they closely fit the training data but still generalize well when presented with new data instances \cite{zhang2021understanding,allen2019convergence,allen2019learning,buhai2020empirical}. Recently, Ref.~\cite{larocca2021theory} studied  overparametrization in the context of QML models. A clear phase transition in the trainability of under- and overparametrized QNNs was evidenced: Below some critical number of parameters (underparametrized) the optimizer greatly struggles to minimize the loss function, whereas beyond that number of parameters (overparametrized) it converges exponentially fast to solutions (see Methods for further details).

Given the desirable features of overparametrization, it is  important to estimate how many parameters are needed to achieve this regime. Here,  we can derive the following theorem.
\begin{theorem}\label{theo:overparam}
    Let $\UC_{\thv}$ be a $S_n$-equivariant QNN with generators in $\GC$. Then, $\UC_{\thv}$ can be overparametrized with $\OC(n^3)$ parameters.
\end{theorem}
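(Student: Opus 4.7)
The plan is to invoke the framework of Ref.~\cite{larocca2021theory}, which characterizes the onset of overparametrization in a QNN in terms of the dimension of its \emph{dynamical Lie algebra} (DLA) $\mathfrak{g} = \langle i\GC \rangle_{\text{Lie}}$, the real Lie algebra generated by the (skew-Hermitian versions of the) generators in $\GC$. The key fact from that work is that a QNN reaches overparametrization once the number of trainable parameters $L$ matches (up to a constant) the dimension of $\mathfrak{g}$: beyond this threshold the quantum Fisher information matrix saturates its maximal rank and the optimizer is able to move in every dynamically accessible direction. Thus it suffices to establish the bound $\dim \mathfrak{g} \in \OC(n^3)$.

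First I would show the upper bound. Every element of $\GC$ commutes with $R(\pi)$ for all $\pi\in S_n$, and the commutator of two $S_n$-equivariant Hermitian operators is again $S_n$-equivariant. Hence $\mathfrak{g}$ is contained in the real Lie algebra $\mathfrak{g}_{S_n}$ of $S_n$-equivariant skew-Hermitian operators. By Lemma~\ref{lem:poly} the dimension of $\mathfrak{g}_{S_n}$ equals the Tetrahedral number $\mathrm{Te}_{n+1}=\binom{n+3}{3}\in\Theta(n^3)$, which yields $\dim\mathfrak{g}\leq\binom{n+3}{3}$.

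Second, I would argue that this bound is in fact tight, so that $\dim\mathfrak{g}=\dim\mathfrak{g}_{S_n}\in\Theta(n^3)$. Invoking the block decomposition in Eq.~\eqref{eq:commutator}, the equivariant algebra factorizes as $\mathfrak{g}_{S_n}\cong\bigoplus_\lm \mathfrak{u}(d_\lm)$ (acting as $\id_{m_\lm}\otimes(\cdot)$ on each isotypic component). The universality statement already recorded in the Methods---namely that $\GC$ generates the full unitary group inside every invariant subspace $\HC_\lm^\nu$---means exactly that the image of $\mathfrak{g}$ in each block is the entire $\mathfrak{u}(d_\lm)$. Combined with the fact that distinct irreps $\lm$ are inequivalent (so there are no Schur-type correlations forcing different blocks to move together), this gives $\mathfrak{g}=\mathfrak{g}_{S_n}$. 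Plugging $L\in\OC(n^3)$ into the overparametrization criterion of~\cite{larocca2021theory} then yields the theorem.

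The principal technical step --- and the one I would expect to require the most care --- is the equality $\mathfrak{g}=\mathfrak{g}_{S_n}$, specifically confirming that the three generators in $\GC$ really do act \emph{independently} and surjectively onto every block $\mathfrak{u}(d_\lm)$ simultaneously, rather than merely generating each block in isolation with correlated parameters across irreps. I would handle this by appealing to Schur--Weyl duality: the inequivalence of the irreps $\lm=(n-m,m)$ guarantees that no non-trivial relation can couple different isotypic components, so block-wise universality lifts to joint surjectivity. Combined with the controllability result of Ref.~\cite{albertini2018controllability} cited in the Methods, this closes the argument and completes the $\OC(n^3)$ overparametrization bound.
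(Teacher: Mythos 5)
Your proposal takes essentially the same route as the paper: both reduce the theorem to the dimension of the dynamical Lie algebra via the overparametrization criterion of Ref.~\cite{larocca2021theory}, and both bound $\dim\mathfrak{g}$ by the commutant dimension ${\rm Te}_{n+1}=\sum_\lm d_\lm^2\in\Theta(n^3)$ using the block decomposition together with the subspace-controllability result of Ref.~\cite{albertini2018controllability} (Lemma~\ref{lem:control}). Note that only the upper bound $\dim\mathfrak{g}\leq {\rm Te}_{n+1}$ is needed for the stated $\OC(n^3)$ claim, so your additional argument for joint surjectivity across the inequivalent blocks, while consistent with the paper's use of Lemma~\ref{lem:control}, is not strictly required.
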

Theorem~\ref{theo:overparam} guarantees that $S_n$-equivariant QNNs only require a polynomial number of parameters to reach overparametrization.

\subsection*{Generalization from few data points}

Thus far we have seen that $S_n$-equivariant QNNs can be efficiently trained, as they exhibit no barren plateaus and can be overparametrized. However, in QML we are not only interested in achieving a small training error, we also aim at low generalization error~\cite{caro2021generalization,banchi2021generalization,caro2022outofdistribution,du2021efficient,huang2021power,abbas2020power}.

Computing the generalization error in Eq.~\eqref{eq:gen-error} is usually not possible, as the probability distribution $P$ over which the data is sampled is generally unknown. However, one can still derive bounds for ${\rm gen}(\thv)$ which guarantee a certain performance when the model sees new data. Here, we obtain an upper bound for the generalization error via the covering numbers (see Methods)~\cite{shalev2014understanding, caro2021generalization}, and prove that the following theorem holds. 
\begin{theorem}\label{thm:gen}
    Consider a QML problem with loss function as described in Eq. \eqref{eq:gen-error}. Suppose that an $n$-qubit $S_n$-equivariant QNN $\mathcal{U}(\thv)$ is trained on $M$ samples to obtain some trained parameters $\thv^*$. Then the following inequality holds with probability at least $1-\delta$
    \begin{equation}\label{eq:gen_bounds}
        {\rm gen}(\thv^*) \leq \mathcal{O}\left(\sqrt{\frac{\textnormal{Te}_{n+1}}{M}} + \sqrt{\frac{\log(1/\delta)}{M}} \right).
    \end{equation}
\end{theorem}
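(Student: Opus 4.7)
\medskip

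\noindent\textbf{Proof proposal.} The plan is to follow the standard covering-number route to uniform convergence, as used in Ref.~\cite{caro2021generalization,shalev2014understanding}, but exploiting the reduced effective dimension of the $S_n$-equivariant hypothesis class established in Lemma~\ref{lem:poly}. First, I would observe that the per-example loss $c(y)\ell_{\thv}(\rho)=c(y)\Tr[\UC_{\thv}(\rho)O]$ is uniformly bounded: with the normalization $c_i=-y_i/M$ from the text and $\|O\|_\infty\le 1$ (as in the hinge-loss setup), the scaled summand $c(y)\ell_{\thv}(\rho)\in[-1,1]$, so Hoeffding-type concentration will apply to $\hat\LC(\thv)-\LC(\thv)$ for each fixed $\thv$ at rate $\sqrt{\log(1/\delta)/M}$.

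Next, the key step is to cover the hypothesis class $\HC_{S_n}=\{U(\thv)\,\cdot\,U(\thv)\ad : \thv\in\mathbb{R}^L\}$ rather than the parameter vector $\thv$ itself. By Eq.~\eqref{eq:commutator} and Lemma~\ref{lem:poly}, the image of the QNN map lies inside the set of $S_n$-equivariant unitaries, a compact Lie group of real dimension $\text{Te}_{n+1}=\binom{n+3}{3}$. Standard volumetric estimates for compact Lie groups (which plug directly into the Caro et al.\ framework) then give an $\epsilon$-cover $\CC_\epsilon$ of $\HC_{S_n}$ in diamond norm with $|\CC_\epsilon|\le (c/\epsilon)^{\text{Te}_{n+1}}$ for some absolute constant $c$. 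This is the crucial place where the geometric prior pays off: without the $S_n$-equivariance constraint, the analogous exponent would be $4^n$, washing out any useful bound.

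Third, I would invoke Lipschitz continuity of the loss in the unitary. A direct expansion shows
\begin{equation}
|\ell_{\thv}(\rho)-\ell_{\thv'}(\rho)|\le 2\|O\|_\infty\,\|U(\thv)-U(\thv')\|_\infty,
\end{equation}
so for any $\thv$ there is a covering element $\thv_c\in\CC_\epsilon$ with $|c(y)\ell_{\thv}(\rho)-c(y)\ell_{\thv_c}(\rho)|\le 2\epsilon/M$ pointwise, and hence $|\LC(\thv)-\LC(\thv_c)|,|\hat\LC(\thv)-\hat\LC(\thv_c)|\le 2\epsilon$. Combining this with Hoeffding's inequality applied independently at each point of $\CC_\epsilon$ and taking a union bound yields, with probability at least $1-\delta$, the simultaneous bound
\begin{equation}
\sup_{\thv}|\LC(\thv)-\hat\LC(\thv)|\le 4\epsilon+\sqrt{\frac{2\bigl(\text{Te}_{n+1}\log(c/\epsilon)+\log(1/\delta)\bigr)}{M}}.
\end{equation}
Finally, I would optimize by choosing $\epsilon=1/\sqrt{M}$, which absorbs the $\log(c/\epsilon)$ factor into the constants hidden in $\OC(\cdot)$ and produces the claimed scaling
\begin{equation}
{\rm gen}(\thv^*)\le \OC\!\left(\sqrt{\frac{\text{Te}_{n+1}}{M}}+\sqrt{\frac{\log(1/\delta)}{M}}\right).
\end{equation}
The main obstacle I anticipate is the cleanest justification of the covering-number estimate for the set of $S_n$-equivariant unitaries with the right explicit dimension $\text{Te}_{n+1}$; once one commits to a concrete metric (operator or diamond norm) and uses Lemma~\ref{lem:poly} together with standard Lie-group volumetric bounds, the remaining steps (boundedness, Lipschitzness, Hoeffding plus union bound, and optimization of $\epsilon$) are routine.
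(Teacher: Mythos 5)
Your overall strategy is the same as the paper's: reduce everything to a covering-number bound for the class of $S_n$-equivariant unitaries, whose effective dimension is $\mathrm{Te}_{n+1}$ by the block decomposition of Eq.~\eqref{eq:commutator} and Lemma~\ref{lem:poly}, and then feed that into the uniform-convergence machinery of Ref.~\cite{caro2021generalization}. The paper makes your ``volumetric estimate'' concrete by covering each block $\mathbb{U}(d_\lambda)$ separately with $(6/\varepsilon)^{2d_\lambda^2}$ elements and taking the product, giving $(6/\varepsilon)^{2\mathrm{Te}_{n+1}}$; your version differs only by an immaterial factor of $2$ in the exponent, and your boundedness and Lipschitz steps are fine.

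The genuine gap is in your final step. With a single-scale $\varepsilon$-net plus Hoeffding and a union bound you obtain a term $\sqrt{\bigl(\mathrm{Te}_{n+1}\log(c/\epsilon)+\log(1/\delta)\bigr)/M}$, and after setting $\epsilon=1/\sqrt{M}$ the factor $\log(c/\epsilon)=\Theta(\log M)$ grows with $M$ and therefore \emph{cannot} be absorbed into the constants of the $\OC(\cdot)$: your argument proves only ${\rm gen}(\thv^*)\le\OC\bigl(\sqrt{\mathrm{Te}_{n+1}\log M/M}+\sqrt{\log(1/\delta)/M}\bigr)$, which is weaker than the stated Eq.~\eqref{eq:gen_bounds}. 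The paper removes this logarithm by invoking the chaining (Dudley entropy-integral) argument inside Theorem 6 of Ref.~\cite{caro2021generalization}, which integrates $\sqrt{\log\mathcal{N}(\varepsilon)}\propto\sqrt{\mathrm{Te}_{n+1}\log(6/\varepsilon)}$ over all scales $\varepsilon$; since $\int_0^1\sqrt{\log(1/\varepsilon)}\,d\varepsilon$ converges, the multi-scale argument yields the clean $\sqrt{\mathrm{Te}_{n+1}/M}$ rate. To close your proof as stated you should either replace the single union bound by this chaining step, or weaken the claimed bound by the $\sqrt{\log M}$ factor.
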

The crucial implication of Theorem~\ref{thm:gen} is that  we can guarantee ${\rm gen}(\thv^*) \leq \epsilon$ with high probability, if $M\in\mathcal{O}\left(\frac{\text{Te}_{n+1}+\log(1/\delta)}{\epsilon^2}\right)$. For fixed $\delta$ and $\epsilon$, this implies $M\in\mathcal{O}(n^3)$, i.e., we only need a  polynomial number of training points.
Also note that this results shows that minimizing the empirical loss closely minimizes the true loss with high probability. Say that $\hat{\LC}^* = \inf_{\thv} \hat{\LC}(\thv)$ is the minimal empirical loss and $\LC^* = \inf_{\thv}\LC(\thv)$ the minimal true loss. Then, with $M\in\mathcal{O}\left(\frac{\text{Te}_{n+1}+\log(1/\delta)}{\epsilon^2}\right)$ training data point the inequality $\lvert \hat{\LC}^* - \LC^* \vert \leq \epsilon$ holds with probability at least $1-\delta$.

Lastly, we remark that Theorem~\ref{thm:gen} can be readily adapted to other GQML models. As shown in Methods, this theorem stems from the fact that the equivariant unitary submanifold, in its block-diagonal form in Eq. \eqref{eq:commutator}, can be covered \cite{shalev2014understanding} by $\varepsilon$-balls in a block-wise manner. In Supplementary Methods 8, we also show that the VC dimension \cite{Hajek2021Statistical} of equivariant QNNs (and also more general parameterized channels) can be upper bounded by the dimension of the commutant of the symmetry group, a fact which could be of independent interest.

\begin{table}[t]
    \centering

    \begin{tabular}{|l||*{2}{c|}}\hline
    Input state&{Trainable?}&{Method}\\\hline\hline
    Symmetric   & Yes & Analytical\\\hline
    Fixed Hamming-weight encoding  & Yes & Analytical\\\hline
    Local Haar random   & Yes & Numerical\\\hline
    Fixed depth random circuit & Yes & Numerical \\\hline
    Disconnected graph state   & Yes & Numerical\\\hline
    $3$-regular graph state   & Yes & Numerical\\\hline
    $n/2$-regular graph state   & Yes & Numerical\\\hline
    Global Haar Random  & No & Analytical\\\hline
    Linear depth random circuit & No & Numerical \\\hline
    Erd\"{o}s–R\'{e}nyi random graph state& No & Numerical\\\hline
    \end{tabular}

    \caption{\textbf{Input pure states and their effect on the trainability of $S_n$-equivariant QNNs.} Trainable means that $\Delta(\sum_{\nu}\sigma_\lm^\nu)\in\Omega(1/\poly(n))$, whereas untrainable means $\Delta(\sum_{\nu}\sigma_\lm^\nu)\in\OC(1/2^n)$. Analytical method indicates that we can exactly compute the scaling of $\Delta(\sum_{\nu}\sigma_\lm^\nu)$, whereas numerical one means that we evaluate it numerically. The analytical proofs and details of the simulations can be found in Supplementary Methods 7. We note that, these results are obtained by computing the loss with a single data instance  (i.e., for $M=1$ in Eq.~\eqref{eq:empirical-loss}). }
    \label{tab:train_states}
\end{table}

\subsection*{Trainable States}\label{sec:trainable}

As discussed in the previous section, $S_n$-equivariant QNNs and measurement operators cannot induce barren plateaus. Thus, the trainability of the model hinges on the behavior of  $\Delta(\sum_{\nu}\sigma_\lm^\nu)$. We note that this dataset-dependent trainability it not unique to $S_n$-equivariant QNNs, but is rather present in all absence of barren plateaus results (see Refs.~\cite{cerezo2020cost,pesah2020absence,larocca2021diagnosing,thanasilp2021subtleties,liu2021presence,thanasilp2022exponential}) as there always exist datasets for which an otherwise trainable model can be rendered untrainable. 

To understand the conditions that lead to an exponentially vanishing of $\Delta(\sum_{\nu}\sigma_\lm^\nu)$ we note that for a Hermitian operator $B$, we have $\Delta(B) = D_{\textnormal{HS}}\left(B,\frac{\Tr[B]}{\dim(B)}\id\right)$, where $D_{\textnormal{HS}}(A,B)=\norm{A-B}_2^2$ is the Hilbert-Schmidt\ distance. Alternatively, we can interpret $\Delta(B)$ as the variance of the eigenvalues of $B$. From here, we can see that one will obtain trainability if at least one $\sigma_\lambda$ is not exponentially close to a multiple of  the identity in some subspace $\HC_\lambda^\nu$.

\begin{figure*}[t]
    \centering
    \includegraphics[width=1\linewidth]{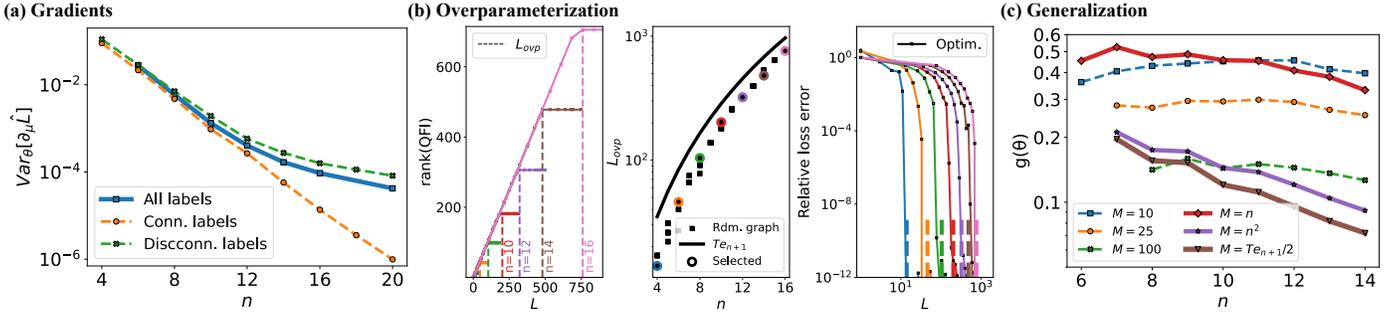}
    \caption{
    \textbf{Task of distinguishing connected from disconnect graphs with an $S_n$-equivariant QNN.}
    a) Variance of the loss function partial derivatives versus the number of qubits $n$ (in log-linear scale). 
    The square blue line depicts the variance for inputs of the QNN drawn from a dataset composed of connected and disconnected graph states. 
    To visualize how the data with different labels contributes to this variance, we also plot in green crosses (orange circles) the variances when the QNN is only fed connected (disconnected) graph states. 
    b) In the left panel, we show representative results for the rank of the QFIM (defined in the main text) versus the number of layers $L$ for different number of qubits $n$. 
    The critical value of layers at which this rank saturates, denoted $L_{ovp}$ (vertical dashed lines), corresponds to the onset of overparametrization.
    In the middle panel, we report the scaling of $L_{\text{ovp}}$ versus the number of qubits (log-linear scale). For each problem size, we present results for $10$ random input graph states and, as a comparison, also report the Tetrahedral numbers ${\rm Te}_{n+1}$ (solid line). 
    In the right panel, we report the relative loss error of optimized QNNs at given number of layers $L$ (in log-linear scale). These are obtained for different system sizes, with 
    the dashed vertical lines indicating the corresponding values of $L_{\text{ovp}}$. 
    c) Normalized generalization error versus number of qubits $n$ (in log-linear scale) for different training dataset sizes $M$. Here, we consider an overparametrized QNN with $L={\rm Te}_{n+1}$. 
    }
    \label{fig:numerics}
\end{figure*}

In Table~\ref{tab:train_states} we present examples of states for which $\Delta(\sum_{\nu}\sigma_\lm^\nu)$ vanishes polynomially, leading to a trainable model, but also cases where the input state leads to exponentially vanishing $\Delta(\sum_{\nu}\sigma_\lm^\nu)$ and thus to a barren plateau. While we leave the details of how each type of input state is generated for the Methods section, we note that the results in Table~\ref{tab:train_states} demonstrate  the critical role that the input states play in determining the trainability of a model (this will be further elucidated in numerical results below). Such insight is particularly important as one can create adversarial datasets yielding barren plateaus (see Supplementary Methods 10). Moreover, it indicates that care must be taken when  encoding classical data into quantum states as the embedding scheme can induce trainability issues~\cite{thanasilp2021subtleties,thanasilp2022exponential}.

\subsection*{Numerical results}

Here, we consider the task of classifying connected graph states from  disconnected graph states, which are prepared as follows. First, we generate $n$-node random graphs from the Erd\"{o}s–R\'{e}nyi distribution~\cite{erdds1959random}, with an edge probability of $40\%$. The ensuing graphs are binned into two categories: connected and disconnected. 
We then embed these graphs into quantum graph states via the canonical  scheme of~\cite{raussendorf2003measurement, hein2004multiparty} (see Methods section).  
 We highlight that such encoding preserves symmetries in the input data, in the sense that a permutation of the underlying graph yields a permutation of the qubits constituting its graph state (i.e., of the form Eq.~(\ref{eq:qubit_permut})).
The previous allows us to create a dataset where half of the states encodes connected graphs (label $y_i=+1$), and the other half encodes disconnected graphs (label $y_i=-1$). 
To analyze the data, we use an $S_n$-equivariant QNN with generators in Eq.~\eqref{eq:generators-main} (see also Fig.~\ref{fig:fig-ansatz}), and measure the operator 
$O=\frac{2}{n(n-1)}\sum_{k<j;j=1}^n X_jX_k$. 

In the following, we characterize the trainability and generalization properties of $S_n$-equivariant QNNs for this classification task, but we note that further aspects of the problem are discussed in the Supplementary Note. These include analyzing the effect of the graph encoding scheme in the trainability, the irrep contributions to the gradient variance, and comparing $S_n$-equivariant QNNs against problem-agnostic ones. In particular,  the latter shows that for the present graph classification task, problem-agnostic models are hard to train and tend to greatly overfit the data, i.e., they  have large generalization errors despite performing well on the training data.

\subsection*{Numerics on barren plateaus}

In Fig.~\ref{fig:numerics}(a) we show the variance of the cost function partial derivatives for a parameter $\theta_\mu$ in the middle of the QNN. 
Each point is evaluated for a total of $50$ random input states, and with $20$ random sets of parameters $\thv$ per input. We can see that when the variance is evaluated for states randomly drawn from the whole dataset -- with an equal number of connected and disconnected graphs -- then $\Var_{\thv}[\partial_\mu \hat{\LC}]$ only decreases  polynomially with the system size (as evidenced by the curved line in the log-linear scale), meaning that the model does not exhibit a barren plateau. We note that, as shown in  Fig.~\ref{fig:numerics}(a), when the input to the QNN is a disconnected graph state, then the variance vanishes polynomially, whereas if we input a connected graph state it vanishes exponentially. This illustrates a key fact of QML: when trained over a dataset, the data from different classes can contribute very differently to the model's trainability (see~\cite{larocca2022group} for a discussion on how this result enables new forms of classification).   

\subsection*{Numerics on overparametrization}

Following the results in Ref.~\cite{larocca2021theory}, let us analyze the overparametrization phenomenon by studying the rank of the quantum fisher information matrix (QFIM)~\cite{cheng2010quantum,meyer2021fisher}, denoted $F(\thv)$ and whose entries are given by 
\small
\begin{equation}
    [F(\thv)]_{jk} = 4\Re[ \bra{\partial_j \psi(\thv)} \ket{\partial_k \psi(\thv)} -\bra{\partial_j \psi(\thv)}  \ket{\psi(\thv)} \bra{\psi(\thv)} \ket{\partial_k \psi(\thv)}],
\nonumber
\end{equation}
\normalsize
with $\ket{\psi(\thv)} = U(\thv) \ket{\psi}$, and $\ket{\partial_i\psi(\thv)}=\partial \ket{\psi(\thv)}/\partial\theta_i=\partial_i\ket{\psi(\thv)}$ for $\theta_i\in\thv$. The rank of the QFIM quantifies the number of potentially accessible directions in state space. In this sense, the model is overparametrized if the QFIM rank is saturated, i.e., if adding more parameters (or layers) to the QNN does not further increase the QFIM rank. When this occurs, one can access all possible directions in state space and efficiently reach the solution manifold~\cite{larocca2021theory,larocca2020exploiting,larocca2020fourier}. On the other hand, the model is underparametrized if the QFIM rank is not maximal. In this case, there exists inaccessible directions in state space, leading to false local minima, that is, local minima that are not actual minima of the loss function.

In Fig.~\ref{fig:numerics}(b, left panel) we report representative results of the QFIM rank versus the number of layers $L$ for problems with even numbers $n\in[4,16]$ of qubits. These results correspond to random connected graphs and random values of $\thv$. Here we can see that, for a given $n$, as the number of layers increases, the rank of the QFIM also increases until it reaches a saturation point. 
Once this critical number of layers (denoted as $L_{\text{ovp}}$) is reached, the model is considered to be overparametrized~\cite{larocca2021theory}. In Fig.~\ref{fig:numerics}(b, middle panel) we plot the scaling of $L_{\text{ovp}}$ (for $10$ random connected or disconnected graphs per system size) versus $n$, as well as the Tetrahedral numbers ${\rm Te}_{n+1}$. 
As can be seen,  in all cases, the overparametrization onset occurs for a number of layers $L_{\text{ovp}}<{\rm Te}_{n+1}$, indicating efficient overparametrization.

To appreciate the practical effects of overparametrization, we report in Fig.~\ref{fig:numerics}(b, right panel) optimization performances of $S_n$-equivariant QNNs as a function of the number $L$ of layers employed.  
All the optimizations are performed using the hinge loss function, with the L-BFGS-B optimization algorithm~\cite{zhu1997algorithm}. The system sizes are in $n\in[4,16]$ qubits, and correspond to the graphs that were studied in the left panel and highlighted in the middle one. 
The relative loss error reported indicates how close an optimized QNN is from the best achievable model. Explicitly, it is defined as $|\hat{\LC}_L-\hat{\LC}_{\textnormal{min}}|/|\hat{\LC}_{\textnormal{min}}|$, where $\hat{\LC}_L$ is the loss achieved after optimization of a QNN with a given $L$, and where $\hat{\LC}_{\textnormal{min}}$ is the minimum loss achieved for any of the values $L$ considered, i.e., $\hat{\LC}_{\textnormal{min}}=\argmin_L \hat{\LC}_L$ (we systematically verify that for sufficient large $L$ all optimizations reliably converge to this same loss $\hat{\LC}_{\textnormal{min}}$). 
For every value of $n$ studied, we see that for a small number of layers the optimizer struggles to significantly minimize the loss. 
However, as $L$ increases, there exists a computational phase transition whereby the optimizer is able to easily identify optimal parameters and reach much smaller loss values. Notably, such computational phase transition occurs slightly before $L_{\text{ovp}}$ (indicated by a dashed vertical line), meaning that even before the QFIM rank saturates, the model has sufficient directions to efficiently reach the solution manifold. Overall, we see that for number of layers growing at most polynomially with $n$, one can ensure convergence to solution of the model.

\subsection*{Numerics on generalization error}

 In Fig.~\ref{fig:numerics}(c) we study the generalization error of an overparametrized $S_n$-equivariant QNN (with $L={\rm Te}_{n+1}$) for different training dataset sizes $M$  and with respect to test sets of size $M_{test}=2\times{\rm Te}_{n+1}$ that are independently drawn from the training ones. Generalization errors are evaluated for random QNNs parameters $\thv$ and we report the $90$-th percentile of the errors obtained, i.e., for $\delta=90\%$ in Eq.~\eqref{eq:gen_bounds}. 
 In the plot, we show the normalized generalization error $g(\thv)=\frac{\text{gen}(\thv)}{\Var^{1/2}_{\thv,\rho}[\ell(\thv,\rho)]}$. 
 We stress that such normalization can only increase the generalization errors obtained, and is only used in order to compare generalization errors across different values of $n$ without artifacts resulting from loss concentration effects as the system sizes grow.  As seen in  Fig.~\ref{fig:numerics}(c), when the size of the training set is constant, the generalization error is also approximately constant across problem sizes. However, when the training set size scales with $n$, the generalization error decreases with $n$, with this even occurring for $M=n$. Notably, if $M={\rm Te}_{n+1}\in\Theta(n^3)$, we can see that the generalization error significantly decreases with problem size. That is, for this problem, we found generalization errors to be better than the scaling of the bounds derived in Eq.~\eqref{eq:gen_bounds}.

\section*{Discussion}

GQML has recently been proposed as a framework  for systematically creating models with sharp geometric priors arising from the symmetries of the task at hand~\cite{larocca2022group,meyer2022exploiting, sauvage2022building, zheng2022super,zheng2021speeding}. Despite its great promise,  this nascent field has only seen  heuristic success as no true performance guarantees have been proved for its models.   In this work we provide the first theoretical guarantees for GQML models aimed at problems with permutation invariance. Our first contribution is the introduction of the $S_n$-equivariant QNN architecture. Using tools from representation theory, we rigorously find that these QNNs present salient features such as absence of barren plateaus (and narrow gorges), generalization from very few data points, and a capability of being efficiently overparametrized. All these favorable properties can be viewed as being direct consequences of the inductive biases embedded in the model, which greatly limits their expressibility~\cite{sim2019expressibility,nakaji2020expressibility,holmes2021connecting}. Namely, these $S_n$-equivariant QNNs act only on the --polynomially large-- multiplicity spaces of the qubit-defining representation of $S_n$. To complete our analysis, we performed numerical simulations for a graph classification task and heuristically found that the model's performance is even better than that predicted by our theoretical results.

Taken together, our results provide the first rigorous guarantees for equivariant QNNs, and   demonstrate that GQML may be a powerful tool in the QML repertoire. 
We highlight that while we focus on problems with $S_n$ symmetry, many of our proof techniques hold for general finite-dimensional compact groups. Hence, we hope that the representation-theory-based techniques used here can serve as blueprints to analyze the performance of other models. We envision that in the near future, GQML models with provable guarantees will be widely spread among the QML literature.

Finally, we note that while our results were derived in the absence of noise, it would be interesting to account for hardware imperfections. Clearly, the presence of noise would change our analysis, and most likely weaken our trainability guarantees. As such, while we can guarantee that $S_n$-equivariant QNNs will be useful on fault-tolerant quantum devices, we do not abandon hope that they can be used in the near-term era provided that noise levels are small enough.

\medskip

Note added: In light of the  recent preprint~\cite{anschuetz2022efficient}, we have added a detailed discussion in the Supplementary Note regarding the possibility of classically simulating $S_n$-equivariant QNNs. As we argue there, for most relevant cases in QML, the  algorithm in~\cite{anschuetz2022efficient} is not fully classical, as it require access to a quantum computer to obtain a ``classical description'' of the input data.   Moreover, even if one is given such ``classical description'', the ensuing algorithm that replaces the use of a QNN scales extremely poorly with the number of qubits. Taken together these results indicate that if one has access to a quantum computer, it is not entirely obvious whether one should use it to obtain a classical description of the data followed by expensive post-processing, or if one should run the QNN on the quantum device and exploit its favorable properties like efficient overparametrization and absence of barren plateaus. We will save such comparison for future work.


Now we will briefly compare $S_n$-equivariant QNNs to other barren-plateau-avoiding architectures.

First, let us consider the shallow hardware efficient ansatz (HEA)~\cite{kandala2017hardware,cerezo2020cost} and the quantum convolutional neural network (QCNN)~\cite{cong2019quantum,pesah2020absence}. While our goal is not to provide a comprehensive description of these models, we recall the three key properties leading to their trainability: locality of the gates, shallowness of the circuit, locality of the measurement operator.  Both the HEA and QCNN are composed of parametrized gates acting in a brick-like fashion on alternating pairs of neighboring qubits (local gates), and are composed of only a few --logarithmically many-- layers of such gates (shallowness of the circuit). The combination of these two factors leads to a low scrambling power and greatly limited expressibility of the QNN. Then, the final ingredient for their trainability requires measuring a local operators (i.e., an operator acting non-trivially on a small number of qubits). While this assumption is guaranteed for QCNNs --due to their feature-space reduction property--, the HEA can be shown to be untrainable for global measurement (i.e., operators acting non-trivially on all qubits). Here we can already see that $S_n$-equivariant QNNs do not share the properties leading to trainability in HEAs and QCNNs. To begin, we can see from the set of generators $\GC$ in Eq.~\eqref{eq:generators-main} that the $S_n$-equivariant architecture allows for all long-range interactions in each layer, breaking the locality of gates assumption. Moreover, and in stark contrast to HEAs, one can train  the $S_n$-equivariant QNN even when measuring global observables (for instance, we allow for the $O=\prod_{j=1}^n\X_j$ in Eq.~\eqref{eq:measurement-op}). Finally, we remark that HEAs and QCNNs cannot be efficiently overparametrized, as they require an exponentially large number of parameters to reach overparametrization~\cite{larocca2021diagnosing}. On the other hand, according to Theorem~\ref{theo:overparam} the $S_n$-equivariant QNN can be overparametrized with polynomially many  layers.

Next, let us consider the transverse field Ising model Hamiltonian variational ansatz (TFIM-HVA)~\cite{wiersema2020exploring,larocca2021diagnosing}. The mechanism leading to absence of barren plateaus in this architectures is more closely related to that of the   $S_n$-equivariant model, although there are still some crucial differences. On the one hand, it can be shown that the  TFIM-HVA has an extremely limited expressibility, having  only a maximum number of free parameters in $\OC(n^2)$, and being able to reach overparametrization with polynomially many layers.  While this is similar to the case of $S_n$-equivariant architectures (see Lemma~\ref{lem:poly} and Theorem~\ref{theo:overparam}),  the block diagonal structure of the TFIM-HVA is fundamentally different than that arising from $S_n$-equivariant: The TFIM-HVA unitary has four exponentially large blocks repeated a single time each, while $S_n$-equivariant unitaries have polynomially small blocks repeated exponentially many times. This subtle, albeit important, distinction makes it such that $S_n$-equivariant QNNs enjoy generalization guarantees (from Theorem~\ref{thm:gen}) which are not directly applicable to TFIM-HVA architectures.

The previous shows that $S_n$-equivariant QNNs stand-out amid the other trainable architectures, exhibit many favorable  properties that other models only partially enjoy.

Lastly, we now consider future directions and possible extensions of our work. We recall that  Definition~\ref{def:equivariant_QNN} requires \textit{every} layer of the QNN to be equivariant. This is evidently not general, as one could have several consecutive layers which are not individually equivariant, but compose to an equivariant unitary for certain $\thv$~\cite{larocca2022group,cincio2018learning}. While in this manuscript we do not consider this scenario, it is worth exploring how less strict equivariance conditions affect the performance and the trainability guarantees here derived. Second, we note that as indicated in this work, the block diagonal structure of the $S_n$-equivariant QNN restricts the information in the input data that the model can access. This could lead to conditions where the model cannot solve the learning task as it cannot `see' the relevant information in the input states. Such issue can be in principle solved by allowing the model to simultaneously act on multiple copies of the data, and even to change the representation of $S_n$ throughout the circuit~\cite{nguyen2022atheory}. We also leave this exploration for future work. 

Another potentially interesting research direction would be equivariant embeddings and re-uploading of classical data. For the purposes of this work, we make no assumptions to the source or form of the data, such as whether it is quantum or classical. However, when considering analyzing classical data on quantum computer, embeddings become important. We give one such example, which we call a "fixed Hamming-weight encoding". Another example is the standard encoding of a graph into a graph state, which we considered in our numerics. This is far from exhaustive and more sophisticated methods exist, including trainable encoding~\cite{skolik2022equivariant}. Similarly, we have not studied how our results change in the presence of data re-uploading~\cite{perez2020data}. We know that if the data is re-uploaded via  equivariant generators  (e.g., if the data re-uploading unitary takes the form $V(\vec{x})=\prod_{l'}\er^{-\ir x_l H_l}$, with $H_l$ being $S_n$-equivariant), then our theoretical guarantees results do not change. This follows from the fact that the DLA of the circuit will remain the same, and hence our results follow. We leave the study of more general encoding and re-uploading schemes for future work.

\section*{Methods}
This section provides an overview of the different tools used in the main text. Here we also present a sketch of the proof of our main results. Full details can be found in the Supplementary Methods.

\subsection*{Building $S_n$-equivariant operators}

Here we briefly describe how to build $S_n$-equivariant operators that can be used as generators of the QNN, or as measurement operators. In particular, we will focus on the so-called twirling method~\cite{meyer2022exploiting,nguyen2022atheory}. Take a unitary representation $R$ of a discrete group $G$ over a vector space $V$.
Then the twirl operator is the linear map $\TC_{G}: GL(V)\to GL(V)$, defined as
\begin{equation}
    \TC_G (A) = \frac{1}{\abs{G}} \sum_{g\in G} R(g) A R(g)\ad  \,.
\end{equation}
It can be readily verified that the twirling of any operator $A$ yields a $G$-equivariant operator, i.e., we have $[\TC_G (A),R(g)]=0$ for any $g\in G$.

The previous allows us to obtain a $G$-equivariant operator from any operator $A\in GL(V)$. For instance, let us consider the case in the case of $G=S_n$, $R$ the qubit-defining representation and $A=X_1$. Then, we have $\TC_G(X_1)=\frac{1}{n!} \sum_{\pi\in S_n} R(\pi) X_1 R(\pi)\ad =\frac{1}{n}\sum_{i=1}^n X_i=\TC_G(X_j)$ for any $1\leq j\leq n$. Note that twirling over $S_n$ cannot change the locality of an operator. That is, twirling a $k$-body operator leads to a sum of $k$-body operators.

\subsection*{Representation theory of $S_n$}

In this section we review a few basic notions from representation theory. For a more thorough treatment we refer the reader to Refs.~\cite{serre1977linear,fulton1991representation,sagan2001symmetric,knapp2001representation}, and more specifically to the tutorial in Ref.~\cite{ragone2022representation} which provides an introduction to representation theory from the perspective of QML.
We recall that we are interested in the  qubit-defining representation of $S_n$, i.e., the one permuting qubits
\begin{equation}
R(\pi\in S_n)\bigotimes_{i=1}^n \ket{\psi_i} = \bigotimes_{i=1}^n \ket{\psi_{\pi^{-1}(i)}}\,.\nonumber
\end{equation}
As mentioned in the main text, representations break down into fundamental building blocks called irreducible representations (irreps).
\begin{definition}[Irrep decomposition]
    Given some unitary representation $R$ of a compact group $G$, there exists a basis under which it takes a block diagonal form
    \begin{equation}\label{eq_meth_irrep_dec}
    R(g\in G)\cong\bigoplus_{\lambda}\bigoplus_{\mu=1}^{m_{r_\lambda}}r_\lambda(\pi)=\bigoplus_{\lambda}r_\lambda(\pi)\otimes\id_{m_{r_\lambda}}\,,
\end{equation}
with $r_\lambda(\pi)$ irreps of $G$ appearing $m_{r_\lambda}$ times.
\end{definition}
The irreps of the symmetric group are commonly labeled by the set of partitions of the integer $n$. A partition of a positive integer $n\in\mbb{N}$ is a non-decreasing sequence of positive integers $\lm=(\lm_1,\cdots,\lm_k)$ satisfying $\sum_i \lm_i = n$. Partitions are typically visualized using young diagrams, a set of empty, left-justified boxes arranged in rows such that there are $\lm_i$ boxes in the $i$-th row. For instance, the integer $n=3$ can split into

\begin{equation}\label{eq:method_partition}
\footnotesize
(3,0)=
\ytableausetup{centertableaux}
\begin{ytableau}
&  &  \\
\end{ytableau}\,, \quad
(2,1)=
\ytableausetup{centertableaux}
\begin{ytableau}
& \\
\\
\end{ytableau}\,, \quad
(1,1,1)=
\ytableausetup{centertableaux}
\begin{ytableau}
\\
\\
\\
\end{ytableau} \,.
\end{equation}
\normalsize
We note that in the case of the qubit-defining representation, the only $\lm$ appearing  in Eq.~\eqref{eq_meth_irrep_dec} have at most two rows (e.g., would not include the last partition in Eq.~\eqref{eq:method_partition}). 

The dimension of an $S_n$ irrep $r_\lm$ can be computed from the hook length formula
\begin{equation}\label{dim_rlm}
    \dim(r_\lm) = \frac{n!}{\prod_{b\in\lm} h_\lm(b)}\,,
\end{equation}
where each $h_\lm(b)$ is the hook length for box $b$ in $\lm$, which is the total number of boxes in a 'hook' (or 'l' shape) composed of box $b$ and every box beneath (in the same column) and to its right (in the same row). 

Given the block-diagonal structure of $R$ in Eq.~\eqref{eq_meth_irrep_dec}, one can see that a general $G$-equivariant operator has to be of the form
\begin{equation}\label{eq:commutator-met-SMh}
    A \cong \bigoplus_{\lm} \id_{\dim(r_\lm)} \otimes A_\lm,
\end{equation}
where $A_\lambda$ are $m_{r_\lm}$-dimensional matrices repeated $\dim(r_\lm)$ times. In general, the number of times an irrep appears in an arbitrary representation $R$ (i.e., $m_{r_\lambda}$ in Eq.~\eqref{eq_meth_irrep_dec}) can be determined through character theory. Instead, in our case, we will take a shortcut and exploit one of the most remarkable results in representation theory, called the Schur-Weyl duality~\cite{goodman2009symmetry}.

Consider the representation $Q$ of the unitary group $\mathbb{U}(2)$ acting on $\HC=(\mbb{C}^2)\tn$ through the $n$-fold tensor product $Q(W\in\mbb{U}(2))=W^{\otimes n}$. Evidently, according to Eq.~\eqref{eq_meth_irrep_dec}, $Q$ will also have an isotypic decomposition
\begin{equation}
    Q(W\in\mbb{U}(2))= \bigoplus_{s} \id_{m_{q_{s}}} \otimes q_{s}(W)\,,
\end{equation}
where $s$ labels the different (spin) irreps of $\mbb{U}(2)$. The Schur-Weyl duality, states that the matrix algebras $\mbb{C}[R]$ and $\mbb{C}[Q]$ mutually centralize each other, meaning that $\mbb{C}[R]$ is the space of $\mbb{U}(2)$-equivariant linear operators, and similarly $\mbb{C}[Q]$ is the space of $S_n$-equivariant ones. As a consequence of this duality, $\HC$ can be decomposed as $\mathcal{H} \cong \bigoplus_\lambda V_\lambda \otimes W_\lambda$, where $\lambda$ simultaneously labels irrep spaces $V_\lambda$ and $W_\lambda$ for $S_n$ and $\mbb{U}(2)$, respectively. That is, $\HC$ supports a simultaneous action of $S_n$ and $\mbb{U}(2)$, where the irreps of each appear exactly once and are correlated: Each of the two-row Young diagrams $\lm=(n-m,m)$ labeling the irreps in $R$ can be associated unequivocally with a spin label $s(\lm)$ for an $\mbb{U}(2)$ irrep appearing in $Q$
\begin{equation}
    s(\lm)=\frac{\lm_1-\lm_2}{2} =\frac{n-2m}{2}\,.
\end{equation}
Moreover, since under the joint action of $S_n\times \mbb{U}(2)$ the multiplicities are one, one can assert that the irrep $q_\lm$ of $\mbb{U}(2)$ appears $\dim(r_\lm)$-times in Q, and conversely, the irrep $r_\lm$ of $S_n$ appears $\dim(q_\lm)$-times in R. Using the well known dimension of spin irreps $\dim(q_{s})=2s+1$, we can derive an expression for the multiplicity of $S_n$ irreps
\begin{equation}
    m_{r_\lm} = \dim(q_{s(\lm)}) = 2 s(\lm) +1 = n-2m+1\,.
\end{equation}
Also, it is straightforwards to adapt the formula in Eq.~\eqref{dim_rlm} to two-row diagrams $\lm=(n-m,m)$
\begin{equation}\label{eq_mlm}
    \dim(r_\lm)=\frac{n!(n-2m+1)!}{(n-m+1)!m!(n-2m)!}.
\end{equation}
We finally note that, since we are ultimately interested in $S_n$-equivariant operators, in the main text we have defined $d_\lm\equiv m_{r_\lm}$ and $m_\lm\equiv\dim(r_\lm)$. That is, the dimension and multiplicity of an irrep in the main text are for the representations of $\mbb{U}$.

\subsection*{Universality, expressibility and dynamical Lie algebra}

In the main text we have argued that the set of generators in Eq.~\eqref{eq:generators-main} is universal within each invariant subspace. Here we will formalize this statement. 

First, let us recall that we say that a parametrized unitary is universal if it can generate any unitary (up to a global phase) in the space over which it acts. One can quantify the  capacity of being able to create different unitaries through the so-called measures of expressibility~\cite{sim2019expressibility,nakaji2020expressibility,holmes2021connecting,larocca2021diagnosing}. Here we will focus on the notion of potential expressibility of a given QNN, which is formalized via the dynamical Lie algebra of the architecture~\cite{zeier2011symmetry}.
\begin{definition}[Dynamical Lie algebra]\label{def_DLA}
    Given a set of generators $\mathcal{G}$ defining a QNN, its dynamical Lie algebra $\mathfrak{g}$ is the span of the Lie closure $\langle \cdot \rangle_{\rm{Lie}}$ of $\mathcal{G}$. 
    That is, $\mathfrak{g} = \spn_{\mbb{R}} \langle \GC \rangle_{\rm{Lie}}$, where $\langle \GC \rangle_{\rm{Lie}}$ is defined as the set of all the nested commutators generated by the elements of $\GC$.
\end{definition}
In particular, the dynamical Lie algebra (DLA) fully characterizes the group of unitaries that can be ultimately expressed by the circuit: for any unitary $U$ realized by a QNN with generators in $\GC$ there exists an anti-hermitian operator $\eta\in\mf{g}=\langle \GC \rangle_{\rm{Lie}}$ such that $U=\er^\eta$. Evidently, $\mf{g}\subseteq \mf{u}(d)$, that is, it is a subalgebra of the space of anti-hermitian operators. When  $\mf{g}$ is $\mf{su}(d)$ or $\mf{u}(d)$  we say that the QNN is controllable or universal since for any pair of states $\ket{\psi}$ and $\ket{\phi}$, there exists a unitary $U=\er^\eta$ with $\eta\in\mf{g}$ such that  $|\langle\phi|U|\psi\rangle|^2=1$.

In the framework of GQML one designs symmetry-respecting QNNs by using group-equivariant generators. This implies that the corresponding DLA is constrained and necessarily takes the form
\begin{equation}
    \mf{g}=\bigoplus_\lm\id_{m_\lm}\otimes \mf{g}_\lm\,,
\end{equation}
where $\mf{g}_\lm\subseteq \mf{u}(d_\lm)$. 
For this scenario, we provide a notion of controllability restricted to each of the invariant subspaces: 
We say that a QNN is subspace-controllable in the isotypic component $\lm$ if $\mf{g}_\lm$ is $\mf{su}(d_\lm)$ or $\mf{u}(d_\lm)$. This means that the QNN can map between any pair of states in every $\HC_\lm^\nu$. Notably, the following result follows from Refs.~\cite{albertini2018controllability,kazi2023universality}. 
\begin{lemma}[Subspace controllability]\label{lem:control}
The set of $S_n$-equivariant generators in Eq.~\eqref{eq:generators-main} is subspace-controllable in every $\lm$. 
\end{lemma}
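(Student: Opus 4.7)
The plan is to reduce the controllability statement on each isotypic block to a classical and well-studied question in $\mathfrak{su}(2)$ representation theory, and then invoke the bracket-generating result of Albertini--D'Alessandro. The starting point is Schur--Weyl duality, already invoked in the Methods: each two-row Young diagram $\lm=(n-m,m)$ is paired with the spin irrep of $\mathbb{U}(2)$ of spin $s(\lm)=(n-2m)/2$ and dimension $d_\lm=2s(\lm)+1$. Any $S_n$-equivariant generator acts on the block $\HC_\lm^\nu$ by the $S_n$-side of the duality, i.e., through its image under the corresponding spin-$s(\lm)$ representation of $\mathfrak{su}(2)$.

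First, I would compute explicitly the restriction of the three generators in $\GC$ to a fixed block $\lm$. Writing $J_\chi=\tfrac{1}{2}\sum_j\chi_j$ for $\chi\in\{X,Y,Z\}$, one has $\tfrac{1}{n}\sum_j X_j=\tfrac{2}{n}J_x$ and similarly for $Y$, while the identity $(\sum_jZ_j)^2=n\id+2\sum_{k<j}Z_jZ_k$ gives
\begin{equation}
\frac{2}{n(n-1)}\sum_{k<j}Z_jZ_k=\frac{4}{n(n-1)}J_z^2-\frac{1}{n-1}\id.
\end{equation}
Restricted to the block $\HC_\lm^\nu$, the operators $J_x,J_y,J_z$ become the standard spin-$s(\lm)$ matrices $J_x^{(s)},J_y^{(s)},J_z^{(s)}$, and the identity shift has no effect on the Lie algebra generated modulo $\id$. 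Consequently, showing subspace controllability on block $\lm$ reduces to showing that the real Lie algebra generated by $\{J_x^{(s)},J_y^{(s)},(J_z^{(s)})^2\}$ contains $\mathfrak{su}(d_\lm)$.

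Next, I would invoke the structural result that gives this generation. Concretely, $[J_x^{(s)},J_y^{(s)}]=iJ_z^{(s)}$ already yields the $\mathfrak{su}(2)$ subalgebra acting irreducibly on the spin-$s$ space, and the presence of $(J_z^{(s)})^2$ is precisely what is needed to break beyond that subalgebra: iterated brackets of $(J_z^{(s)})^2$ with $J_x^{(s)}$ and $J_y^{(s)}$ produce polynomial combinations of $J_x^{(s)},J_y^{(s)},J_z^{(s)}$ of arbitrarily high degree, and a standard argument (Theorems~1 and 2 of Ref.~\cite{albertini2018controllability}) shows these polynomials span $\mathfrak{su}(2s+1)$ whenever $s\geq 1/2$. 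For $s=1/2$ one has $(J_z^{(1/2)})^2=\tfrac{1}{4}\id$ but $\{J_x^{(1/2)},J_y^{(1/2)}\}$ already generates $\mathfrak{su}(2)=\mathfrak{su}(d_\lm)$, and for $s=0$ the block is one-dimensional so the statement is vacuous. Combining these cases over all $\lm=(n-m,m)$ with $m=0,1,\ldots,\lfloor n/2\rfloor$ yields $\mathfrak{g}_\lm\supseteq\mathfrak{su}(d_\lm)$ in every isotypic component, which is the desired subspace controllability.

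The main obstacle is the algebraic step in the second paragraph: showing that iterated brackets of $J_x^{(s)},J_y^{(s)}$ with $(J_z^{(s)})^2$ actually exhaust $\mathfrak{su}(2s+1)$, rather than getting stuck in some proper subalgebra. This is the nontrivial content imported from Ref.~\cite{albertini2018controllability}; the rest of the argument is essentially bookkeeping about representations and identity shifts. A self-contained proof would proceed by showing inductively that brackets of the form $\mathrm{ad}_{J_x}^a\mathrm{ad}_{J_y}^b(J_z^2)$ yield, as $a+b$ grows, symmetric polynomials in $J_x,J_y,J_z$ of every degree up to $2s$, which together span the full matrix algebra on the spin-$s$ space since the spin-$s$ irrep is faithful and multiplicity-free.
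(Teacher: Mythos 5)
Your proposal is correct and follows essentially the same route as the paper: the paper gives no proof of its own beyond asserting that the lemma "follows from Theorems 1 and 2 in Ref.~\cite{albertini2018controllability}", and your reduction via Schur--Weyl to the collective spin operators $J_x^{(s)},J_y^{(s)},(J_z^{(s)})^2$ on each block, followed by that same citation for the bracket-generation step, is exactly the intended argument (with the $s=0,1/2$ edge cases and the identity-shift bookkeeping made explicit, which the paper leaves implicit).
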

As shown below, this result will be crucial for the proof of Theorem~\ref{thm:main1}.

\subsection*{Proof of absence of barren plateaus}

Here we sketch our proof of Theorem~\ref{thm:main1}. Our goal is to calculate $\Var_{\thv}[\partial_{\mu}\widehat{\LC}(\thv)]=\mathbb{E}_{\thv}[(\partial_{\mu}\widehat{\LC}(\thv))^2]-\mathbb{E}_{\thv}[\partial_{\mu}\widehat{\LC}(\thv)]^2$. In general, we will have to deal with integrals of the form $\int_{\DC_{\thv}} f(U(\thv))$ where $f$ is some parametrized function --for example the cost function or its partial derivatives-- and $\DC_{\thv}:[0,2\pi]^M\rightarrow [0,1]$ is some distribution over parameter space --typically the uniform distribution. The first step is to transform the integration over parameter space to an integration over the resulting QNN  unitary distribution $\DC$.
Since $\DC$ is known to converge (given enough depth) to $\epsilon$-approximate $2$-designs over the Lie group $\er^{\mf{g}}$~\cite{larocca2021diagnosing,ragone2023unified}, assuming $f$ is a polynomial of degree $\leq 2$ in the entries of $U$ (as is the case of interest), we can replace the integration over $\DC$ with an integration over the Haar measure over $e^{\mf{g}}$. In general, $\mf{g}$ is a reductive Lie algebra consisting of multiple orthogonal ideals $\mf{g}=\bigoplus_\lm \mf{g}_\lm$, where $\mf{g}_i$ is either simple or abelian, and the Lie group $\er^{\mf{g}}$ is the product group $\bigotimes_\lm \er^{\mf{g}_\lm}$. It can be shown (see Supplementary Methods 4) that the Haar measure over such a product group is the product of the Haar measures over the normal subgroups $\er^{\mf{g}_\lm}$. Finally, the 
ansatz with generators in Eq.~\eqref{eq:generators-main} has a DLA $\mf{g}$ that is subspace controllable, meaning that each simple $\mf{g}_\lm$ is either $\mf{su}(d_\lm)$ or $\mf{u}(d_\lm)$~\cite{albertini2018controllability,kazi2023universality}. Summarizing, we have
\small
\begin{align}\label{eq:unittohaar-main}
    \int_{\DC_{\thv}} d\thv f(U(\thv)) &= \int_{\DC} dU f(U) \nonumber\\
    &\rightarrow \int_{e^{\mf{g}}} d\mu(U) f(U) \nonumber\\
    &=  \prod_\lm \int_{\mbb{U}(d_\lm)} d\mu_\lm(U_\lm) f(\{U_\lm\})\,.
\end{align}
\normalsize

The main advantage of Eq.~\eqref{eq:unittohaar-main} is that we can use tools from Weingarten calculus to perform symbolic integration over the Haar measure of unitary groups~\cite{puchala2017symbolic}.
Explicitly, we care for the variance of $\partial_{\mu}\widehat{\LC}(\thv)=\sum_{i=1}^M c_i \partial_{\mu} \ell_{\thv}(\rho_i)$ where
\begin{align}
    \partial_{\mu} \ell_{\thv}(\rho_i) =\ir\Tr[ U_B \rho_i U_B\ad [H_\mu,U_A\ad O U_A]]\,,\nonumber
\end{align}
where $U_B$ and $U_A$ denote the unitary circuits before and after the parametrized gate we are differentiating. Assuming that the depth $L$ of the QNN is enough to guarantee that both $U_A$ and $U_B$ form independent $2$-designs on $\er^{\mf{g}}$, we can use Weingarten calculus to evaluate the terms in $\mathbb{E}_{\thv}[(\partial_{\mu}\widehat{\LC}(\thv))^2]$ $\mathbb{E}_{\thv}[\partial_{\mu}\widehat{\LC}(\thv)]^2$, and obtain Eq.~\eqref{eq:var_main} in Theorem~\ref{thm:main1}. The details of this calculation are presented in Supplementary Methods 4.

While the previous, along with the results in Theorem~\ref{thm:main2}, allow to prove by direct construction that $S_n$-equivariant QNNs do not lead to barren plateaus, we here provide further intuition for this result in terms of the expressibility reduction induced by the equivariance inductive biases. 
As shown in Ref.~\cite{holmes2021connecting}, QNNs that are too expressible exhibit exponentially vanishing gradients, whereas models whose expressibility is restricted can exhibit large gradients. Hence, we can expect the result in Corollary~\ref{cor:main1} to be a direct consequence of the reduced expressibility of the model. We can further formalize this statement using the results of Ref.~\cite{larocca2021diagnosing}. Therein, it was found that there exists a link between the presence or absence of barren plateaus and the dimension of the DLA. In particular, the authors conjecture, and prove for several examples (see also Ref.~\cite{zhang2021quantum} for an independent verification of the conjecture), that deep  QNNs have gradients that scale inversely with the size of the DLA, that is,
$\text{Var}_{\thv}[\partial \hat{\LC}(\thv)] \sim \frac{1}{\poly(\dim(\mathfrak{g}))}$. For the case of $S_n$-equivariant QNNs we know from Lemma~\ref{lem:control} that $\dim(\mathfrak{g})\in\Theta(n^3)$ thus indicating that the variance should  only vanish polynomially with $n$ (for an appropriate dataset). We note this conjecture was recently proven~\cite{ragone2023unified,fontana2023theadjoint}.

\subsection*{Intuition behind the overparametrization phenomenon}

Recently, Ref.~\cite{larocca2021theory} studied the overparametrization of QNNs from the perspective of a complexity phase transition in the loss landscape. In the underparametrized regime, we experience rough loss landscapes, which in turn can be traced back to a lack of control in parametrized state space. When the number of parameters is below the number of directions in state space, the parameter update can only access a subset of those potential directions. This constraint can be shown to introduce false local minima, that is, local minima that are not actual minima of the loss function (as a function of state space) but instead artifacts of a poor parametrization. Instead, upon introduction of more parameters the parametrized state starts accessing these previously unavailable directions, and false minima disappear as we transition into the overparametrized regime. Because in the overparametrized regime the number of parameters is greater than the number of ever accessible directions, solutions in the control landscape are degenerate and form multidimensional submanifolds, allowing the optimizer to reach them more easily~\cite{larocca2020exploiting,larocca2020fourier}.

The main contribution in Ref.~\cite{larocca2021theory} is the realization that, under standard assumptions, one needs one parameter per potentially accessible direction in state space, and that the latter can be formalized as the dimension of the orbit of the initial state under the Lie group $e^\mf{g}$ resulting from the exponential of the DLA $\mf{g}$. In particular, this means that exponential DLA architectures require an exponential number of parameters to be overparametrized, whereas polynomial DLA architectures only need a polynomial number of them.

With these definitions, the proof of Theorem~\ref{theo:overparam} is immediate. Since the ansatz is subspace controllable (Lemma~\ref{lem:control}), the dimension of the DLA is equal to the dimension of the commutant, which is $\Theta(n^3)$ (Lemma~\ref{lem:poly}).

To finish, we note that the definition of overparametrization employed here (in terms of saturating the number of available directions) might differ from some definitions of overparametrization in the classical neural network community. Namely, in classical machine learning researchers have studied overparametrization through the optics of generalization ~\cite{fan2022understanding,zhang2021understanding,du2018gradient,brutzkus2018sgd,bartlett2021deep}, while others have investigated the effect of overparametrization on the training processes. In particular, it has been proposed that the onset of overparametrization can be detected using metrics such as parameter redundancy which is captured by the rank of the classical Fisher information matrix~\cite{fukumizu1996regularity,liu1998overparameterization,roychowdhury2017reducing}. It is precisely this  notion of overparametrization that  Ref.~\cite{larocca2021theory} ported to quantum,  and the one used in the present work.

\subsection*{Generalization}

We consider the QML setting in this paper where the empirical loss function is of the form $\widehat{\LC}(\thv)=\sum_{i=1}^M c_i \Tr[U_{\thv}(\rho_i) O]$. We assume that the operator norm of $O$ is bounded by a constant and also $|c_i| \leq 1/M$. We follow closely the covering number-based generalization bound in \cite{caro2021generalization}. First recall that a set $V$ is $\varepsilon$-covered by a subset $K \subseteq V$ with respect to a distance metric $d$ if $\forall x \in V$, $\exists y \in K $ such that $d(x,y)\leq \varepsilon$. The $\varepsilon$-covering number (w.r.t. metric $d$) of $V$, denoted as $\mathcal{N}(V, d, \varepsilon)$, is the cardinality of the smallest such subset \cite{shalev2014understanding}. The following theorem bounds the $\varepsilon$-covering number of $S_n$-equivariant QNNs.
\begin{theorem} The $\varepsilon$-covering number of the set $\mathcal{V}_n$ of $n$-qubit unitary $S_n$-equivariant QNNs w.r.t. the operator norm $\|\cdot\|$ can be bounded as $\mathcal{N}(\mathcal{V}_n, \| \cdot \|, \varepsilon) \leq \left(\frac{6}{\varepsilon}\right)^{2\rm{Te}_{n+1}}$.
\end{theorem}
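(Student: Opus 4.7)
The plan is to exploit the block-diagonal structure $U \cong \bigoplus_{\lm} \id_{m_\lm} \otimes U_\lm$ of $S_n$-equivariant unitaries, reducing the covering problem on $\mathcal{V}_n$ to a product of covering problems on the individual unitary groups $U(d_\lm)$.

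First I would invoke the standard covering number bound for a single unitary group: $\mathcal{N}(U(d), \|\cdot\|, \varepsilon) \leq (6/\varepsilon)^{2 d^2}$. This is a volumetric argument: $U(d)$ sits inside the unit ball (in operator norm) of the space $\mathbb{C}^{d\times d}$, which is $2d^2$-dimensional as a real vector space, and any unit ball in a $D$-dimensional normed space admits an $\varepsilon$-net of size at most $(3/\varepsilon)^{D}$. A small amount of care is needed to transfer from the Frobenius norm of the natural Euclidean embedding to the operator norm, which produces the constant $6$.

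Next I would use the key structural fact that whenever $U = \bigoplus_\lm \id_{m_\lm} \otimes U_\lm$ and $V = \bigoplus_\lm \id_{m_\lm} \otimes V_\lm$ share the same isotypic decomposition, the operator norm factorizes as $\|U - V\| = \max_\lm \|U_\lm - V_\lm\|$, because tensoring with an identity and taking direct sums both preserve the operator norm. Consequently, if $K_\lm \subset U(d_\lm)$ is an $\varepsilon$-net with respect to $\|\cdot\|$, then the block direct sum of the Cartesian product $\prod_\lm K_\lm$ is an $\varepsilon$-net for $\mathcal{V}_n$, giving
\begin{equation}
\mathcal{N}(\mathcal{V}_n, \|\cdot\|, \varepsilon) \;\leq\; \prod_\lm \left(\frac{6}{\varepsilon}\right)^{2 d_\lm^2} \;=\; \left(\frac{6}{\varepsilon}\right)^{2 \sum_\lm d_\lm^2}.
\end{equation}

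Finally, I would verify the dimension identity $\sum_\lm d_\lm^2 = \text{Te}_{n+1}$. With $d_\lm = n - 2m + 1$ running over $m = 0, 1, \ldots, \lfloor n/2 \rfloor$, this sum reduces (depending on parity of $n$) to a sum of squares of consecutive odd or even integers, and in both cases collapses to $\binom{n+3}{3} = \text{Te}_{n+1}$ via standard closed forms. This matches the count of free real parameters from Lemma~\ref{lem:poly}, which is no coincidence: both quantities measure the real dimension of the equivariant unitary manifold. Inserting this into the product bound yields precisely $(6/\varepsilon)^{2\text{Te}_{n+1}}$. The main obstacle is pinning down the constant $6$ in the single-block bound, since one has to track operator-versus-Frobenius-norm conversions carefully through the volume-packing argument; the rest of the proof (block-wise operator norm and the dimension count) is essentially routine given the representation-theoretic decomposition already established in Eq.~\eqref{eq:commutator}.
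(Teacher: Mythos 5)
Your proposal is correct and follows essentially the same route as the paper's proof: block-diagonalize via Eq.~\eqref{eq:commutator}, cover each $\mathbb{U}(d_\lm)$ with the standard $(6/\varepsilon)^{2d_\lm^2}$ bound, use $\|U-\Tilde{U}\| = \max_\lm \|U_\lm - \Tilde{U}_\lm\|$ to combine the nets, and invoke $\sum_\lm d_\lm^2 = {\rm Te}_{n+1}$. The only difference is presentational: you sketch the volumetric argument behind the single-block bound, whereas the paper cites it directly from the literature.
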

\begin{proof} Recall that an $S_n$-EQNN $U$ can be block-diagonalized as $U \cong \bigoplus_{\lambda} \id_{m_\lambda} \otimes U_\lambda $, where each $U_\lambda$ is a unitary for $U$ to be unitary. Let $\mathbb{U}(d_\lambda)$ denote the set of all unitaries of dimension $d_\lambda$.
Following Lemma 6 in \cite{caro2021generalization} and Section 4.2 in \cite{vershynin2018highdimensional} we can bound the $\varepsilon$-covering number of $\mathbb{U}_{d_\lambda}$ as follows
\begin{equation}
    \mathcal{N}(\mathbb{U}(d_\lambda), \| \cdot \|, \varepsilon) \leq \left(\frac{6}{\varepsilon} \right)^{2d_\lambda^2}.
\end{equation}
Next, we construct an $\varepsilon$-covering subset of the $S_n$-equivariant unitary set, $\mathcal{V}_n$, from the $\varepsilon$-covering subsets, $K_\lambda$, of the blocks $\lambda$. Indeed, given any $U \cong \bigoplus_\lambda \id_{m_\lambda} \otimes U_\lambda$, we can identify unitaries $\Tilde{U}_\lambda$ from $K_\lambda$ such that $\|U_\lambda - \Tilde{U}_\lambda\| \leq \varepsilon, \forall \lambda$. The unitary $\Tilde{U} \cong \bigoplus_\lambda \id_{m_\lambda} \otimes \Tilde{U}_\lambda$ then satisfies
\begin{equation}
    \|  U -\Tilde{U} \| \leq \max_\lambda \|U_\lambda - \Tilde{U}_\lambda\| \leq \varepsilon.
\end{equation}
Therefore, there exists an $\varepsilon$-covering net of $\mathcal{V}_n$ of size $\prod_\lambda \left(\frac{6}{\varepsilon} \right)^{2d_\lambda^2} = \left(\frac{6}{\varepsilon} \right)^{2 \rm{Te}_{n+1}}$, concluding the proof.
\end{proof}
Having established this bound on the $\varepsilon$-covering numbers of $S_n$-EQNN, we apply a known result from \cite{caro2021generalization} (with some extra care) to obtain Theorem \ref{thm:gen}.
\begin{proof}[Proof of Theorem \ref{thm:gen}]
We assume knowledge of Theorem 6 in \cite{caro2021generalization}. In step two of the proof where the authors use the chaining argument \cite{dudley1999uniform} to bound the generalization error, notice that the covering number $\mathcal{N}_j$ in their Eq. (64) is replaced by $\left(\frac{6}{\varepsilon} \right)^{2\rm{Te}_{n+1}}$ in our case. In other words, there is no architecture-dependence (the number of gates $T$ in their case) inside the logarithm in the resulting Eq. (65). Applying this change to the rest of their proof leads to our claimed generalization bound.
\end{proof}

We note that in the previous derivation, we have used  knowledge of the isotypic decomposition of the $S_n$-equivariant QNN, which allows us to obtain a specialized generalization error bound that does not follow from a direct application of the results in~\cite{caro2021generalization}.

\subsection*{Trainable and untrainable states}

Here we describe how the states in Table~\ref{tab:train_states} are obtained. The "symmetric states" are obtained from the symmetric subspace~\cite{harrow2013church}, i.e., the set of states $\{\ket{\psi}\in \mathcal{H}\,|\, R(\pi) \ket{\psi} = \ket{\psi},\ \forall \pi \in S_n\}$. The so-called "fixed Hamming-weight encoded" states correspond to states representing classical data: Given an array of real values $\{x_i\}$, such that $\sum_i x_i^2 = 1$, each $x_i$ is encoded as the weight of a unique bitstring $\textbf{z}$ of Hamming weight $k$, where $k$ is some fixed constant. That is, prepare the state $\ket{\textbf{x}} = \sum_{\substack{\textbf{z} \text{ s.t. } w(\textbf{z}) = k}} x_\textbf{z}\ket{\textbf{z}}$, where we are now indexing $x_i$ with a bitstring $\textbf{z}$. "Local  Haar random" states are obtained by preparing the state $\ket{0}^{\otimes n}$ and applying a Haar random single-qubit unitary to each qubit.  "Global  Haar random" states are obtained by preparing the state $\ket{0}^{\otimes n}$ and applying a random $n$-qubit unitary sampled from the Haar measure over $\mathbb{U}(d)$. The "fixed and linear depth random circuit" states correspond to the states obtained by preparing the state $\ket{0}^{\otimes n}$  and respectively applying a constant-depth, or linear-depth layered hardware-efficient quantum circuit~\cite{kandala2017hardware,cerezo2020cost} with random parameters. For the "graph states", we use a canonical encoding to embed a  graph into a quantum state~\cite{raussendorf2003measurement, hein2004multiparty}. Specifically, to create a graph state, one starts with the state $\ket{+}^{\otimes n}$, and applies a controlled-Z rotation for every edge in the graph. We consider $3$-regular and $n/2$-regular graphs, as well as random graphs generated according to the Erd\"{o}s–R\'{e}nyi model~\cite{erdds1959random}. 

\section*{DATA AVAILABILITY}
Data generated and analyzed during current study are available from the corresponding author upon reasonable request.

\section*{CODE AVAILABILILTY}
Code used to generate data in this study are available from the corresponding author upon reasonable request.

\section*{ACKNOWLEDGMENTS}

We thank Michael Ragone and Paolo Braccia for insightful discussion on geometric quantum machine learning. We also thank Felix Leditzky for discussion regarding Hermitian Young operators and Dylan Herman for useful questions and comments. L.S. was partially supported by the NSF Quantum Leap Challenge Institute for Hybrid Quantum Architectures and Networks (NSF Award 2016136). L.S. also acknowledges  supported by LANL ASC Beyond Moore’s Law project.  M.L. acknowledges initial support by the Center for Nonlinear Studies at Los Alamos National Laboratory (LANL) and by the U.S. Department of Energy (DOE), Office of Science, Office of Advanced Scientific Computing Research, under the Accelerated Research in Quantum Computing (ARQC) program. F.S. acknowledges support by the  Directed Research and Development (LDRD) program of LANL under project number 20220745ER. M.C. and M.L. were partially supported by Directed Research and Development (LDRD) program of LANL under project number 20210116DR and 20230049DR. This work was also supported by NSEC Quantum Sensing at LANL.

\section*{AUTHOR CONTRIBUTIONS}
The project was conceived by ML and MC. The manuscript was written by LS, ML, QTN, FS and MC. Theoretical results were derived by LS, ML, QTN and MC. Numerical simulations were performed by LS, ML and FS. LS and ML contributed equally and are to be considered co-first authors.

\section*{COMPETING INTERESTS}
The authors declare no competing interests.

\end{document}